\documentclass[12pt,a4paper]{article}

\usepackage[utf8]{inputenc}
\usepackage[T1]{fontenc}
\usepackage[english]{babel}
\usepackage{amsmath,amssymb,amsthm}
\usepackage{mathtools}
\usepackage{dsfont}
\usepackage{mathrsfs}
\usepackage{geometry}
\usepackage{microtype}
\usepackage{setspace}
\usepackage{graphicx}
\usepackage{tikz}
\usetikzlibrary{decorations.pathmorphing}
\usepackage{booktabs}
\usepackage{hyperref}
\hypersetup{
  colorlinks = true,
  linkcolor  = blue,
  citecolor  = blue,
  urlcolor   = blue
}
\usepackage[
  backend    = biber,
  style      = phys,
  sorting    = none,
  maxnames   = 5,
  doi        = true,
  url        = false
]{biblatex}
\usepackage{fancyhdr}
\fancypagestyle{plain}{%
  \fancyhf{}%
  \fancyhead[L]{\scshape A Preprint}%
  \fancyhead[R]{\today}%
  \fancyfoot[C]{\thepage}%
}

\newcommand{\R}{\mathbb{R}}
\newcommand{\diff}{\mathrm{d}}
\newcommand{\bn}{\mathbf{n}}
\newcommand{\br}{\mathbf{r}}
\newcommand{\bR}{\mathbf{R}}

\newcommand{\bE}{\mathbf{E}}
\newcommand{\bB}{\mathbf{B}}
\newcommand{\bH}{\mathbf{H}}
\newcommand{\bD}{\mathbf{D}}
\newcommand{\bP}{\mathbf{P}}
\newcommand{\dOmega}{\partial\Omega}
\newcommand{\Vmoy}{\mathscr{V}(\dOmega)}
\newcommand{\Rp}{R'_\perp}
\newcommand{\Rpp}{R''_\perp}

\newcommand{\chispar}{\chi^s_\parallel}
\newcommand{\chisper}{\chi^s_\perp}
\newcommand{\chisNLpar}{\chi^{(2),s}_\parallel}
\newcommand{\chisNLper}{\chi^{(2),s}_\perp}
\newcommand{\jump}[1]{\left[\! \!\left[#1\right]\! \!\right]}

\newcommand{\Mlin}{\mathbf{M}^{\mathrm{lin}}}
\newcommand{\multiE}[1]{%
  \langle\!\!\langle #1 \rangle\!\!\rangle
}

\newcommand{\brE}{\mathbf{E}}
\newcommand{\brJ}{\mathbf{J}}

\newtheorem{remark}{Remark}
\newtheorem{proposition}{Proposition}
\newtheorem{corollary}{Corollary}

\begin{document}

\title{Where Does Surface \texorpdfstring{$\chi^{(2)}$}{chi(2)} Come From?\\
  A Systematic Derivation of Nonlinear Surface Susceptibilities\\
  from Bulk Nonlocal Response}

\author{F. Zolla}

\date{\today}

\maketitle

\begin{abstract}
We extend the distributional framework developed in the companion
paper~\cite{Zolla2026nonlocalopticalresponsesurface} to the nonlinear case, focusing on the
second-order ($\chi^{(2)}$) response responsible for second-harmonic
generation~(SHG).  Starting from the most general tensorial nonlocal
second-order constitutive relation and combining a spatial moment
expansion with a distributional thin-layer limit, we show that the
full complexity of the nonlinear interfacial response condenses, at
leading order, into two scalars --- the nonlinear surface
susceptibilities $\chisNLpar$ and $\chisNLper$ --- associated with the
tangential and normal components of the electric field, respectively.
A key structural result is established: via a marginal integration
over one field argument, the nonlinear surface problem reduces
recursively to an effective linear one, whose surface susceptibility
is determined by the bulk nonlinear kernel alone.  Generalized nonlinear
Maxwell boundary conditions are derived explicitly for planar and
spherical interfaces, and curvature corrections are obtained
systematically.  The formalism is illustrated on Gaussian, Yukawa, and
tensorial Lorentz kernels.
\end{abstract}

\section{Introduction}

The nonlinear optical response of material interfaces is a subject of
enduring importance~\cite{ZollaWild1,ZollaWild2,ZollaWild3}, from the early theoretical predictions of surface second-harmonic generation~\cite{Bloembergen1962} to its
modern applications in the characterization of metallic
nanostructures~\cite{Bachelier2010,Butet2015} and two-dimensional
materials~\cite{Kumar2013,Li2013}.  At an interface between two
centrosymmetric media, the bulk second-order response vanishes by
symmetry, and the leading nonlinear signal originates entirely from
the surface layer, where inversion symmetry is broken.  This makes
SHG an exquisitely surface-sensitive probe, capable of detecting
monolayer-level structural changes.

The companion paper~\cite{Zolla2026nonlocalopticalresponsesurface} established a systematic
distributional framework for the \emph{linear} nonlocal response of
a material body $\Omega$ with smooth boundary $\dOmega$.  Starting
from the most general tensorial nonlocal constitutive relation for
the polarization $\bP$ and combining a spatial moment expansion with
a distributional thin-layer limit, it was shown that the interfacial
linear response condenses, at leading order, into two scalar surface
susceptibilities $\chispar$ and $\chisper$, which generalize the
Feibelman $d$-parameters to interfaces of arbitrary curvature.

The present paper extends this framework to the nonlinear case.  The
extension is not straightforward: the second-order kernel
$\Delta^{(2)}_{ijk}(\br,\br',\br'')$ depends on \emph{two} spatial
arguments, and the moment expansion must be carried out in a
six-dimensional space.  Nevertheless, the same distributional
machinery applies, and the key structural feature --- the
bulk/boundary decomposition and the thin-layer limit --- carries over
intact.  The central new result is the \emph{marginal integration
principle}: by integrating out one of the two field arguments of
$\Delta^{(2)}_{ijk}$, the nonlinear surface problem reduces to an
effective linear problem with a renormalized kernel, whose surface
susceptibility is determined entirely by the bulk nonlinear kernel.

Throughout this paper, we adopt the same notation as~\cite{Zolla2026nonlocalopticalresponsesurface},
to which we refer freely for definitions and background.  We work at
the degenerate second-harmonic frequency: the two input photons are at
frequency $\omega$, and the output is at $2\omega$.  The
generalization to the non-degenerate case $\omega_1 + \omega_2 \to
\omega_3$ follows by desymmetrization and is indicated in the
conclusion.  The Einstein summation convention is used throughout,
with no distinction between covariant and contravariant indices.

The paper is organized as follows.  Section~\ref{sec:1D_NL} treats
the one-dimensional scalar case, establishing the bulk/boundary
decomposition and the thin-layer limit in the simplest setting.
Section~\ref{sec:3D_NL} extends the analysis to three dimensions,
deriving the nonlinear surface susceptibilities and curvature
corrections.  Section~\ref{sec:marginal} establishes the marginal
integration principle.  Section~\ref{sec:BC_NL} derives the
generalized nonlinear Maxwell boundary conditions.
Section~\ref{sec:examples_NL} illustrates the formalism on explicit
kernel/geometry combinations.  Section~\ref{sec:example_shg} works
out a fully explicit one-dimensional example of surface SHG against
a unit-index background.  Section~\ref{sec:example_yukawa} treats a
non-trivial slab geometry where bulk and surface contributions compete,
exhibiting Maker-fringe oscillations.  Section~\ref{sec:example_centro}
considers the complementary centrosymmetric case, where bulk SHG
vanishes and the response is entirely surface-driven.
Section~\ref{sec:conclu_NL} concludes
and outlines the extension to arbitrary order $\chi^{(n)}$.

\section{The one-dimensional scalar case}
\label{sec:1D_NL}

\subsection{Physical motivation and setup}

As in~\cite{Zolla2026nonlocalopticalresponsesurface}, we begin with the one-dimensional scalar
model, which captures the essential mechanisms without tensorial
complications.  We consider two half-spaces: vacuum for $z < 0$ and
a nonlocal medium occupying $\Omega = \{z > 0\}$, with the interface
at $z = 0$.  The second-order nonlocal constitutive relation reads
\begin{equation}\label{eq:constit1D_NL}
  P^{(2)}(z)
  = \varepsilon_0 \int_{\R}\!\int_{\R}
    \Delta^{(2)}(z,z',z'')\,E(z')\,E(z'')\,
    \diff z'\,\diff z''.
\end{equation}
The kernel $\Delta^{(2)}(z,z',z'')$ is assumed symmetric in its last
two arguments (intrinsic permutation symmetry of the degenerate case):
\begin{equation}\label{eq:symm_1D}
  \Delta^{(2)}(z,z',z'') = \Delta^{(2)}(z,z'',z').
\end{equation}

\begin{remark}
  The kernel $\Delta^{(2)}$ satisfying~\eqref{eq:constit1D_NL} is not
  unique: only its symmetric part in $(z',z'')$ contributes to
  $P^{(2)}$, since the product $E(z')\,E(z'')$ is itself symmetric in
  these two variables.  Among all kernels yielding the same
  polarization, there exists a unique representative symmetric in
  $(z',z'')$---namely, the one satisfying~\eqref{eq:symm_1D}.
  Throughout the sequel, we shall consider only this symmetric
  representative.
\end{remark}

\subsection{Hypotheses on the second-order kernel}

We impose conditions analogous to those of~\cite{Zolla2026nonlocalopticalresponsesurface}.

\begin{enumerate}
\item \textbf{Rapid decay.}  For fixed $z$, the function
  $\Delta^{(2)}(z,z',z'')$ decays rapidly in both $|z'-z|$ and
  $|z''-z|$; all joint moments exist.

\item \textbf{Support in $\Omega \times \Omega \times \Omega$.}
  $\Delta^{(2)}(z,z',z'') = H(z)\,H(z')\,H(z'')\,
  \Delta^{(2)}(z,z',z'')$.

\item \textbf{Bulk homogeneity.}  Far from the interface:
  \begin{equation}\label{eq:homo1D_NL}
    \Delta^{(2)}(z,z',z'')
    = H(z)\,H(z')\,H(z'')\,
      \tilde\Delta^{(2)}(z-z',z-z'').
  \end{equation}

\item \textbf{Centro-symmetry.}  The bulk kernel satisfies
  $\tilde\Delta^{(2)}(-Z',-Z'') = -\tilde\Delta^{(2)}(Z',Z'')$
  (odd under simultaneous sign reversal).
\end{enumerate}

\paragraph{A consequence of~\eqref{eq:symm_1D} and Hypothesis~3.}
The intrinsic permutation symmetry of the full kernel
$\Delta^{(2)}$, equation~\eqref{eq:symm_1D}, combined with bulk
homogeneity, transfers automatically to the bulk kernel:
\begin{equation}\label{eq:bulk_intrinsic_1D}
  \tilde\Delta^{(2)}(Z',Z'') = \tilde\Delta^{(2)}(Z'',Z').
\end{equation}
We shall invoke this property at several places below; it is not an
independent hypothesis.

\begin{remark}
  Hypothesis~4 encodes the fact that in a centrosymmetric bulk medium,
  the second-order bulk response vanishes: applying the inversion
  $\br\to-\br$ to the constitutive relation forces
  $\tilde\Delta^{(2)} = 0$ for a centrosymmetric material.  Here,
  however, we allow a nonzero $\tilde\Delta^{(2)}$ in order to treat
  non-centrosymmetric media; centro-symmetry will be imposed as a
  special case.
\end{remark}

\subsection{Spatial moment expansion}

Setting $Z' = z'-z$ and $Z'' = z''-z$, and expanding $E(z')$ and
$E(z'')$ in Taylor series about $z$:
\begin{equation}\label{eq:Taylor1D_NL}
  E(z+Z') \approx E(z) + Z'\,E'(z) + \tfrac{1}{2}{Z'}^2\,E''(z) + \cdots,
\end{equation}
and inserting into~\eqref{eq:constit1D_NL} yields the
\emph{second-order spatial moment expansion}:
\begin{equation}\label{eq:P_moments1D_NL}
  P^{(2)}(z)
  \approx \varepsilon_0
  \sum_{j,k=0}^{\infty}
  \chi^{(2)}_{jk}(z)\,
  \dfrac{\diff^j E}{\diff z^j}(z)\,
  \dfrac{\diff^k E}{\diff z^k}(z),
\end{equation}
where the \emph{second-order spatial moments} are
\begin{equation}\label{eq:chi2_jk_def}
  \chi^{(2)}_{jk}(z)
  := \frac{1}{j!\,k!}
     \int_{\R}\!\int_{\R}
     {Z'}^j\,{Z''}^k\,
     \tilde\Delta^{(2)}(Z',Z'')\,H(z+Z')\,H(z+Z'')\,
     \diff Z'\,\diff Z''.
\end{equation}

\subsection{Bulk and boundary decomposition}

Following the same procedure as in~\cite{Zolla2026nonlocalopticalresponsesurface}, we split the
integration domain by writing
$\int_{-z}^{+\infty} = \int_{\R} - \int_{-\infty}^{-z}$ for each
variable:
\begin{equation}\label{eq:decomp1D_NL}
  \chi^{(2)}_{jk}(z)
  = \chi^{(2),\Omega}_{jk}\,H(z)
    - \chi^{(2),\Vmoy}_{jk}(z),
\end{equation}
where the \emph{bulk second-order moment} is the constant
\begin{equation}\label{eq:chi2_jk_bulk}
  \chi^{(2),\Omega}_{jk}
  := \frac{1}{j!\,k!}
     \int_{\R}\!\int_{\R}
     {Z'}^j\,{Z''}^k\,
     \tilde\Delta^{(2)}(Z',Z'')\,
     \diff Z'\,\diff Z'',
\end{equation}
and the \emph{second-order boundary term} is
\begin{equation}\label{eq:chi2_jk_bdy}
  \chi^{(2),\Vmoy}_{jk}(z)
  := \frac{H(z)}{j!\,k!}
     \int\!\!\int_{\{Z' < -z\}\cup\{Z'' < -z\}}
     {Z'}^j\,{Z''}^k\,
     \tilde\Delta^{(2)}(Z',Z'')\,
     \diff Z'\,\diff Z''.
\end{equation}

\begin{remark}
  The boundary term $\chi^{(2),\Vmoy}_{jk}(z)$ involves the
  integration domain where at least one of the source points
  $z+Z'$ or $z+Z''$ lies outside $\Omega$ (i.e., in vacuum).
  It is concentrated in a layer of thickness $\sim\ell$ near
  $z=0$ and vanishes as $z\to+\infty$, in exact analogy with
  the linear case~\cite{Zolla2026nonlocalopticalresponsesurface}.
\end{remark}

\subsection{Vanishing of the bulk second-order response in
            centrosymmetric media}

If the medium is centrosymmetric, then $\tilde\Delta^{(2)}(-Z',-Z'')
= -\tilde\Delta^{(2)}(Z',Z'')$, i.e., the kernel is odd under
simultaneous reversal.  In that case, all the bulk moments
$\chi^{(2),\Omega}_{jk}$ with $j+k$ even vanish identically.
In particular, $\chi^{(2),\Omega}_{00} = 0$: there is no bulk
second-harmonic response, as expected from symmetry.

The leading bulk term has $j+k = 1$, i.e., it involves one
field gradient.  However, by the intrinsic
symmetry~\eqref{eq:bulk_intrinsic_1D},
$\chi^{(2),\Omega}_{10} = \chi^{(2),\Omega}_{01}$, so this term
takes the form $\chi^{(2),\Omega}_{10}\,[E(z)\,E'(z) + E'(z)\,E(z)]
= 2\chi^{(2),\Omega}_{10}\,E(z)\,E'(z)$.

\subsection{Distributional thin-layer limit}

We apply the same distributional procedure as in~\cite{Zolla2026nonlocalopticalresponsesurface}.
The leading boundary term at order $j=k=0$ is
\begin{equation}
  \chi^{(2),\Vmoy}_{00}(z)
  = H(z)\int\!\!\int_{\{Z'<-z\}\cup\{Z''<-z\}}
    \tilde\Delta^{(2)}(Z',Z'')\,\diff Z'\,\diff Z''.
\end{equation}
In the thin-layer limit $\ell\to 0$ with $z/\ell$ fixed, this
concentrates near $z=0$ and converges in $\mathcal{D}'$ to
\begin{equation}\label{eq:distrib1D_NL}
  \chi^{(2),\Vmoy}_{00}\!\left(\frac{z}{\eta}\right)
  \;\xrightarrow[\eta\to 0]{\mathcal{D}'}\;
  B_0\,\delta(z),
\end{equation}
with the \emph{leading nonlinear surface coefficient}
\begin{equation}\label{eq:B0_def}
  B_0
  := \int_0^{+\infty}\!\int_0^{+\infty}
     \bigl(Z' + Z''\bigr)\,
     \tilde\Delta^{(2)}(Z',Z'')\,
     \diff Z'\,\diff Z''.
\end{equation}
This is the second-order analogue of the formula
$A_0 = \int_0^{+\infty} Z\,\tilde\Delta(Z)\,\diff Z$
of~\cite{Zolla2026nonlocalopticalresponsesurface}: the nonlinear surface coefficient is set by
the first joint moment of the bulk second-order kernel over the
exterior half-space.

The nonlinear surface polarization at leading order is therefore
\begin{equation}\label{eq:Psurf1D_NL}
  P^{(2),\partial\Omega}(z)
  \approx -\varepsilon_0\,B_0\,
  [E^+(0)]^2\,\delta(z),
\end{equation}
where $E^+(0)$ is the field evaluated at the interface from the
exterior side.

\section{The three-dimensional tensorial case}
\label{sec:3D_NL}

\subsection{General framework and moment expansion}

The most general tensorial second-order nonlocal constitutive
relation reads
\begin{equation}\label{eq:constit3D_NL}
  P^{(2)}_i(\br)
  = \varepsilon_0
    \int_{\R^3}\!\int_{\R^3}
    \Delta^{(2)}_{ijk}(\br,\br',\br'')\,
    E_j(\br')\,E_k(\br'')\,
    \diff\br'\,\diff\br'',
\end{equation}
where the kernel is symmetric in its last two index-position pairs
(intrinsic permutation symmetry):
\begin{equation}\label{eq:symm_3D}
  \Delta^{(2)}_{ijk}(\br,\br',\br'')
  = \Delta^{(2)}_{ikj}(\br,\br'',\br').
\end{equation}
Setting $\bR' := \br'-\br$ and $\bR'' := \br''-\br$, and expanding
the fields about $\br$:
\begin{equation}
  E_j(\br+\bR')
  = E_j(\br)
  + R'_l\,\partial_l E_j(\br)
  + \tfrac{1}{2}R'_l R'_m\,\partial_l\partial_m E_j(\br)
  + \cdots,
\end{equation}
insertion into~\eqref{eq:constit3D_NL} yields the
\emph{second-order spatial moment expansion}:
\begin{multline}\label{eq:P_moments3D_NL}
  P^{(2)}_i(\br)
  = \varepsilon_0 \bigl[
    \chi^{(2,00)}_{ijk}(\br)\,E_j\,E_k \\
    + \chi^{(2,10)}_{ijkl}(\br)\,(\partial_l E_j)\,E_k
    + \chi^{(2,01)}_{ijkl}(\br)\,E_j\,(\partial_l E_k) \\
    + \chi^{(2,11)}_{ijklm}(\br)\,(\partial_l E_j)\,(\partial_m E_k)
    + \cdots \bigr],
\end{multline}
where all fields are evaluated at $\br$, and the moment tensors are
\begin{align}
  \chi^{(2,00)}_{ijk}(\br)
  &:= \int_{\R^3}\!\int_{\R^3}
      \Delta^{(2)}_{ijk}(\br,\br+\bR',\br+\bR'')\,
      \diff\bR'\,\diff\bR'',
  \label{eq:chi200_def}\\[6pt]
  \chi^{(2,10)}_{ijkl}(\br)
  &:= \int_{\R^3}\!\int_{\R^3}
      R'_l\,
      \Delta^{(2)}_{ijk}(\br,\br+\bR',\br+\bR'')\,
      \diff\bR'\,\diff\bR'',
  \label{eq:chi210_def}\\[6pt]
  \chi^{(2,11)}_{ijklm}(\br)
  &:= \int_{\R^3}\!\int_{\R^3}
      R'_l\,R''_m\,
      \Delta^{(2)}_{ijk}(\br,\br+\bR',\br+\bR'')\,
      \diff\bR'\,\diff\bR''.
  \label{eq:chi211_def}
\end{align}

\subsection{Hypotheses on the three-dimensional second-order kernel}

We impose the following conditions in direct analogy
with~\cite{Zolla2026nonlocalopticalresponsesurface}.

\begin{enumerate}

\item \textbf{Rapid decay.}
  For every $\br$, all joint moments of
  $\Delta^{(2)}_{ijk}(\br,\br+\bR',\br+\bR'')$
  with respect to $(\bR',\bR'')$ exist.

\item \textbf{Support in $\Omega^3$.}
  \begin{equation}
    \Delta^{(2)}_{ijk}(\br,\br',\br'')
    = \mathds{1}_\Omega(\br)\,
      \mathds{1}_\Omega(\br')\,
      \mathds{1}_\Omega(\br'')\,
      \Delta^{(2)}_{ijk}(\br,\br',\br'').
  \end{equation}

\item \textbf{Bulk homogeneity.}
  Far from $\dOmega$:
  \begin{equation}\label{eq:homo3D_NL}
    \Delta^{(2)}_{ijk}(\br,\br',\br'')
    = \mathds{1}_\Omega(\br)\,
      \mathds{1}_\Omega(\br')\,
      \mathds{1}_\Omega(\br'')\,
      \tilde\Delta^{(2)}_{ijk}(\br-\br',\br-\br'').
  \end{equation}

\item \textbf{Bulk isotropy.}
  The bulk kernel $\tilde\Delta^{(2)}_{ijk}(\bR',\bR'')$ is isotropic,
  i.e., it is invariant under simultaneous rotation of all three
  arguments.  By the classical structure theorems for isotropic
  tensor functions of vector arguments
  \cite{Weyl1939_ClassicalGroups,Spencer1971_Invariants} (see
  Remark~\ref{rmk:isotropic_basis} below), it admits a finite
  expansion on a basis of \emph{carrier tensors} built from the
  metric $\delta_{ij}$ and the components of $\bR'$ and $\bR''$, with
  scalar coefficients depending only on the joint invariants.
  Three representative low-degree terms are
  \begin{multline}\label{eq:iso_NL}
    \tilde\Delta^{(2)}_{ijk}(\bR',\bR'')
    = f(R',R'',\bR'\cdot\bR'')\,\delta_{ij}\,R''_k \\
    + g(R',R'',\bR'\cdot\bR'')\,\delta_{ik}\,R'_j
    + h(R',R'',\bR'\cdot\bR'')\,\delta_{jk}\,(R'_i + R''_i)
    + \cdots,
  \end{multline}
  where $f$, $g$, $h$ are scalar functions of the invariants
  $R' = |\bR'|$, $R'' = |\bR''|$, and $\bR'\cdot\bR''$, and the
  ``$\cdots$'' stands for the remaining basis elements (cf.\
  Remark~\ref{rmk:isotropic_basis}).

\item \textbf{Centro-symmetry.}
  $\tilde\Delta^{(2)}_{ijk}(-\bR',-\bR'')
  = -\tilde\Delta^{(2)}_{ijk}(\bR',\bR'')$:
  the kernel is odd under simultaneous inversion.  As in the
  one-dimensional case, this implies the vanishing of the bulk
  second-order response for centrosymmetric media.

\end{enumerate}

\begin{remark}[Structure of the isotropic basis]
\label{rmk:isotropic_basis}
  The expansion~\eqref{eq:iso_NL} rests on two classical results
  from invariant theory~\cite{Weyl1939_ClassicalGroups,
  Spencer1971_Invariants}.  First, every polynomial scalar
  invariant of two vectors $(\bR',\bR'')$ under $O(3)$ is a polynomial
  in the three basic invariants
  $R'^2 = \lvert\bR'\rvert^2$, $R''^2 = \lvert\bR''\rvert^2$, and
  $\bR'\cdot\bR''$.  Second, every isotropic tensor-valued function of
  $(\bR',\bR'')$ is a finite linear combination of \emph{carrier
  tensors} formed from the metric $\delta_{ij}$ and the components
  of $\bR'$ and $\bR''$, with coefficients that are scalar functions
  of those invariants.

  For a third-order tensor depending on two vectors, the carrier
  tensors that are linear in each of $\bR'$ and $\bR''$ already form
  the six-element family
  \[
    \delta_{ij}\,R'_k,\quad \delta_{ij}\,R''_k,\quad
    \delta_{ik}\,R'_j,\quad \delta_{ik}\,R''_j,\quad
    \delta_{jk}\,R'_i,\quad \delta_{jk}\,R''_i,
  \]
  to which one must add the cubic terms
  $R'_i R'_j R'_k,\,R'_i R'_j R''_k,\,\ldots,\,R''_i R''_j R''_k$
  (eight in total), and so on for higher polynomial degrees.  The
  display~\eqref{eq:iso_NL} retains only three of these basis
  elements for readability, the omitted ``$\cdots$'' standing for
  the remaining members.  The additional constraints of
  centro-symmetry (Hypothesis~5) and of intrinsic permutation
  symmetry~\eqref{eq:bulk_intrinsic_3D} considerably reduce the
  number of independent scalar coefficients in the full expansion.
  Pseudo-tensorial terms involving the Levi-Civita symbol
  $\epsilon_{ijk}$ are excluded here since
  $\tilde\Delta^{(2)}_{ijk}$ relates polar quantities only
  (cf.\ Appendix~\ref{app:centrosymmetry}).
\end{remark}

\paragraph{A consequence of~\eqref{eq:symm_3D} and Hypothesis~3.}
As in the one-dimensional case, the intrinsic permutation symmetry
of the full kernel, equation~\eqref{eq:symm_3D}, combined with bulk
homogeneity, implies that the bulk kernel inherits the corresponding
symmetry:
\begin{equation}\label{eq:bulk_intrinsic_3D}
  \tilde\Delta^{(2)}_{ijk}(\bR',\bR'')
  = \tilde\Delta^{(2)}_{ikj}(\bR'',\bR').
\end{equation}
This is therefore not an independent hypothesis.

\begin{remark}
  The last two assumptions can be relaxed if one wishes to treat
  non-centro\-symmetric or anisotropic media.  In that case, additional
  terms appear in the moment expansion, but the distributional
  framework developed below applies without structural change.
\end{remark}

\subsection{Bulk and boundary decomposition}

For $\br\in\Omega$, the integration domain decomposes as
$\R^6 = (\Omega-\br)^2 \cup \text{(complement)}$.  Following the
same splitting as in~\cite{Zolla2026nonlocalopticalresponsesurface}, the second-order polarization
splits into a bulk part and an interfacial correction:
\begin{equation}\label{eq:decomp_P3D_NL}
  P^{(2)}_i(\br)
  = P^{(2),\Omega}_i(\br)
  + P^{(2),\Vmoy}_i(\br),
\end{equation}
where the \emph{bulk second-order polarization} is
\begin{equation}\label{eq:Pbulk3D_NL}
  P^{(2),\Omega}_i(\br)
  := \varepsilon_0\,\mathds{1}_\Omega(\br)
     \int_{\R^3}\!\int_{\R^3}
     \tilde\Delta^{(2)}_{ijk}(\bR',\bR'')\,
     E_j(\br+\bR')\,E_k(\br+\bR'')\,
     \diff\bR'\,\diff\bR''
\end{equation}
(that of an infinite homogeneous nonlinear medium), and the
\emph{second-order interfacial contribution} is
\begin{equation}\label{eq:deltaP3D_NL}
  P^{(2),\Vmoy}_i(\br)
  := -\varepsilon_0\,\mathds{1}_\Omega(\br)
     \int\!\!\int_{\Omega^c\times\Omega
                   \cup\,\Omega\times\Omega^c}
     \tilde\Delta^{(2)}_{ijk}(\br-\br',\br-\br'')\,
     E^+_j(\br')\,E_k(\br'')\,
     \diff\br'\,\diff\br'',
\end{equation}
concentrated in a layer of thickness $\ell$ near $\dOmega$.

\paragraph{Hierarchy of the boundary cells.}
The integration domain $\R^6$ partitions in fact into \emph{four}
cells, $\Omega\times\Omega$, $\Omega^c\times\Omega$,
$\Omega\times\Omega^c$, and $\Omega^c\times\Omega^c$
(Fig.~\ref{fig:decomp_termes_croises}).  The first generates the
bulk polarization~\eqref{eq:Pbulk3D_NL}; the two ``cross cells'',
where exactly one source point lies outside $\Omega$, are the
contributions retained in~\eqref{eq:deltaP3D_NL}; the remaining
``both-outside'' cell $\Omega^c\times\Omega^c$ has been omitted.

This omission is justified by a hierarchy in the small parameter
$\ell$.  By the rapid-decay hypothesis (Hypothesis~1), each
constraint forcing a source point to lie outside $\Omega$ costs a
factor of order $e^{-d/\ell}$, where $d$ denotes the distance from
$\br$ to $\dOmega$.  After the distributional thin-layer limit, each
cross cell therefore yields a $\delta_{\dOmega}$ contribution at
order $\ell^1$ : this is the leading surface susceptibility
derived in~\S\ref{sec:distrib3D_NL} below.  The both-outside cell,
carrying \emph{two} such exterior constraints, contributes only at
order $\ell^2$, on the same footing as the genuine curvature
corrections, into which it is absorbed in the $O(\ell^2)$ terms
of~\eqref{eq:deltaP_NL_distrib}.

\begin{figure}[h]
  \begin{center}
    \resizebox{\textwidth}{!}{\begin{tikzpicture}[
box/.style={rounded corners, draw, thick, align=center, minimum width=6cm, minimum height=3cm, font=\small},
crossbox/.style={rounded corners, draw, thick, align=center, minimum width=6cm, minimum height=3cm, font=\small},
title/.style={font=\bfseries\large},
formula/.style={font=\large}
]

\node[box, fill=blue!20] (bulk) at (-7,2.5)
{
{\title {}Bulk}\\[6pt]
$
\mathds{1}_{\Omega}(\mathbf{r})
\left(
\displaystyle{\int}_{\R^3 \times \R^3} \tilde{\Delta}_{ijk}^{(2)}(-\mathbf{R}',-\mathbf{R}'') \, \diff \mathbf{R}' \, \diff \mathbf{R}''
\right)
$\\[6pt]
$\chi^{(2),\Omega}$\\
Homogeneous Material
};

\node[box, fill=orange!30] (surf) at (5,2.5)
{
{\title {}Surface Correction}\\[6pt]
$\mathds{1}_{\Omega}(\mathbf{r})
\left(
\displaystyle{\int}_{(\R^3-\Omega) \times (\R^3- \Omega)} \tilde{\Delta}_{ijk}^{(2)}(-\mathbf{R}',-\mathbf{R}'') \, \diff \mathbf{R}' \, \diff \mathbf{R}''
\right)
$\\[6pt]
Higher Order Boundary Effect
};

\node[crossbox, fill=red!35] (cross1) at (-7,-2.5)
{
{\title {}Cross Term}\\
Interior $\leftrightarrow$ Exterior\\[6pt]
$\mathds{1}_{\Omega}(\mathbf{r})
\left(
\displaystyle{\int}_{\R^3\times (\R^3-\Omega)} \tilde{\Delta}_{ijk}^{(2)}(-\mathbf{R}',-\mathbf{R}'') \, \diff \mathbf{R}' \, \diff \mathbf{R}''
\right)
$\\[6pt]
$\rightarrow \delta_S(\mathbf r)\chi^{(2),\partial \Omega}$
};

\node[crossbox, fill=red!35] (cross2) at (4.5,-2.5)
{
{\title {}Cross Term}\\
Exterior $\leftrightarrow$ Interior\\[6pt]
$\mathds{1}_{\Omega}(\mathbf{r})
\left(
\displaystyle{\int}_{(\R^3-\Omega) \times \R^3} \tilde{\Delta}_{ijk}^{(2)}(-\mathbf{R}',-\mathbf{R}'') \, \diff \mathbf{R}' \, \diff \mathbf{R}''
\right)
$\\[6pt]
$\rightarrow \delta_S(\mathbf r)\chi^{(2),\partial \Omega}$
};

\def\R{2.7} 
\def\ep{3.5pt} 

\filldraw[
    fill=blue!20,
    draw=red!35,
    dashed,
    line width=\ep
] (-1.4,0) circle (\R);

\node at (-1.0,0)
{
\begin{tabular}{c}
\textcolor{black}{Materia}l $\Omega$\\
Boundary $S$
\end{tabular}
};

\node at (-1,4.7) {\Large Vacuum};

\end{tikzpicture}}
  \end{center}
  \caption{Schematic decomposition of the second-order polarization
    at a point $\br$ near the boundary $\dOmega$.  The integration
    domain over the two source variables $(\br',\br'')$ splits into
    four cells: the homogeneous bulk contribution (top-left), a
    higher-order surface correction where both source points lie
    outside~$\Omega$ (top-right), and two cross terms (bottom)
    where exactly one of the source points lies outside.  After
    the distributional thin-layer limit, the two cross terms
    combine to produce the leading surface susceptibility
    $\chi^{(2),\partial\Omega}$ supported on $\dOmega$.}
  \label{fig:decomp_termes_croises}
\end{figure}

Accordingly, each moment tensor decomposes as
\begin{equation}\label{eq:decomp_chi3D_NL}
  \chi^{(2,pq)}_{ij\cdots}(\br)
  = \chi^{(2,pq),\Omega}_{ij\cdots}
  - \chi^{(2,pq),\Vmoy}_{ij\cdots}(\br),
\end{equation}
with constant bulk moments and boundary terms concentrated near
$\dOmega$.

\paragraph{Vanishing of the bulk response for centrosymmetric media.}
By centro-symmetry, $\tilde\Delta^{(2)}_{ijk}(-\bR',-\bR'') =
-\tilde\Delta^{(2)}_{ijk}(\bR',\bR'')$.  The bulk moment
$\chi^{(2,00),\Omega}_{ijk} = \int\!\int
\tilde\Delta^{(2)}_{ijk}\,\diff\bR'\,\diff\bR''$ changes sign under
$(\bR',\bR'')\to(-\bR',-\bR'')$ and is therefore zero: the bulk
second-harmonic response vanishes identically for centrosymmetric
media.

\subsection{Distributional thin-layer limit and nonlinear surface
            susceptibilities}
\label{sec:distrib3D_NL}

We apply the distributional thin-layer procedure
of~\cite{Zolla2026nonlocalopticalresponsesurface} to the boundary terms~\eqref{eq:decomp_chi3D_NL}.
The leading-order moment ($p=q=0$) yields a distribution supported
on $\dOmega$:
\begin{equation}\label{eq:M0_NL}
  \chi^{(2,00),\Vmoy}_{ijk}(\br)
  \;\xrightarrow{\ell\to 0}\;
  \mathcal{M}^{(2,00)}_{ijk}\,\delta_{\dOmega},
\end{equation}
with the \emph{nonlinear surface moment tensor}
\begin{equation}\label{eq:M0_NL_def}
  \mathcal{M}^{(2,00)}_{ijk}(\bR_0)
  :=
  \int_0^{+\infty}\!\int_{\R^2}
  \int_0^{+\infty}\!\int_{\R^2}
  \tilde\Delta^{(2)}_{ijk}(\bR',\bR'')\,
  \diff^2\bR'_\parallel\,\diff\Rp\,
  \diff^2\bR''_\parallel\,\diff\Rpp,
\end{equation}
where both half-spaces $\Rp > 0$ and $\Rpp > 0$ correspond to the
exterior $\Omega^c$ near $\dOmega$.

\subsubsection{Decomposition onto interface projectors}

By bulk isotropy, $\mathcal{M}^{(2,00)}_{ijk}$ decomposes onto the
natural projectors of the interface.  For a centrosymmetric medium,
the leading nonzero moment arises from the $p+q = 1$ terms (one
gradient), and the tensor $\mathcal{S}^{(2,0)}_{ijk}$ decomposes as:
\begin{equation}\label{eq:S20_decomp}
  \mathcal{S}^{(2,0)}_{ijk}
  = \chisNLpar\,P^\parallel_{ij}\,n_k
  + \chisNLper\,n_i\,n_j\,n_k
  + \chisNLpar[2]\,\delta_{ij}\,n_k,
\end{equation}
where $P^\parallel_{ij} = \delta_{ij} - n_i n_j$ is the tangential
projector, $\bn$ is the outward unit normal, and $\chisNLpar$,
$\chisNLper$, $\chisNLpar[2]$ are three independent scalar surface
susceptibilities given by integrals of
$\tilde\Delta^{(2)}_{ijk}$ over the exterior half-space.

\begin{remark}[Comparison with the linear case]
  In the linear case~\cite{Zolla2026nonlocalopticalresponsesurface}, isotropy and centro-symmetry
  together force $\mathcal{S}^{(1)}_{ijk} = 0$ at order $\ell^1$.
  In the nonlinear case, centro-symmetry forces the bulk
  $\chi^{(2,00),\Omega}_{ijk} = 0$, but the \emph{surface} term
  $\mathcal{S}^{(2,0)}_{ijk}$ is generically nonzero: the symmetry
  breaking at the interface generates a nonlinear surface response
  even in a centrosymmetric bulk medium.  This is the fundamental
  origin of surface SHG.
\end{remark}

\subsubsection{Curvature corrections}

At order $\ell^2$, the mean curvature $H$ and Gaussian curvature $K$
generate corrections to the nonlinear surface susceptibilities,
exactly as in the linear case:
\begin{equation}\label{eq:deltaP_NL_distrib}
  P^{(2),\dOmega}_i(\br)
  = -\varepsilon_0
  \Bigl[
    \mathcal{S}^{(2,0)}_{ijk}\,\delta_{\dOmega}
    + \ell^2\,\bigl(
      \nu^{(2),\parallel} H\,P^\parallel_{ij}\,n_k
      + \nu^{(2),\perp} H\,n_i n_j n_k
      \bigr)\,\delta_{\dOmega}
    + \cdots
  \Bigr]
  E^+_j\,E^+_k
  + O(\ell^3),
\end{equation}
where the curvature coefficients $\nu^{(2),\parallel}$ and
$\nu^{(2),\perp}$ are given by the second-order joint moments of
$\tilde\Delta^{(2)}_{ijk}$ over the exterior half-space.

\section{The marginal integration principle}
\label{sec:marginal}

We now establish the key structural result announced in the
introduction: the nonlinear surface problem reduces, via a marginal
integration, to an effective linear one.

\subsection{Statement of the principle}

\begin{proposition}[Marginal integration]\label{prop:marginal}
  Define the \emph{marginalized kernel}
  \begin{equation}\label{eq:marginal_def}
    \tilde\Delta^{(2\to 1)}_{ij}(\bR';\,E^+)
    := \int_{\R^3}
       \tilde\Delta^{(2)}_{ijk}(\bR',\bR'')\,E^+_k(\br+\bR'')\,
       \diff\bR''.
  \end{equation}
  Then the second-order bulk polarization~\eqref{eq:Pbulk3D_NL}
  can be written as
  \begin{equation}\label{eq:Pmarginal}
    P^{(2),\Omega}_i(\br)
    = \varepsilon_0\,\mathds{1}_\Omega(\br)
      \int_{\R^3}
      \tilde\Delta^{(2\to 1)}_{ij}(\bR';\,E^+)\,
      E_j(\br+\bR')\,
      \diff\bR',
  \end{equation}
  which has the form of a \emph{linear} nonlocal constitutive
  relation with the field-dependent effective kernel
  $\tilde\Delta^{(2\to 1)}_{ij}$.
\end{proposition}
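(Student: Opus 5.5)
This is essentially a Fubini argument applied to the bulk polarization~\eqref{eq:Pbulk3D_NL}. Fix $\br\in\Omega$; for $\br\notin\Omega$ both sides vanish because of the common prefactor $\mathds{1}_\Omega(\br)$. The plan has three steps: justify exchanging the order of the two integrations over $\R^3\times\R^3$, perform the inner $\bR''$-integration first, and recognize the result as the marginalized kernel~\eqref{eq:marginal_def}.

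\textbf{Step 1: absolute integrability.} I will show that
\[
(\bR',\bR'')\mapsto \tilde\Delta^{(2)}_{ijk}(\bR',\bR'')\,E_j(\br+\bR')\,E_k(\br+\bR'')
\]
is absolutely integrable on $\R^6$, for each fixed $i$ and summed over $j,k$. The rapid-decay hypothesis (Hypothesis~1) gives that all joint moments of $\tilde\Delta^{(2)}$ exist. In particular, $(1+|\bR'|)^N(1+|\bR''|)^N\tilde\Delta^{(2)}_{ijk}$ is integrable for every $N$. I will then assume that the field $E$ is locally bounded with at most polynomial growth; this is the setting implicitly used for the Taylor/moment expansion~\eqref{eq:P_moments3D_NL}. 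Under that assumption the product is dominated by an integrable function, and Fubini--Tonelli applies.

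\textbf{Step 2: the inner integral.} Fubini then allows writing the double integral as an iterated integral, with $\bR''$ innermost. The factor $E_j(\br+\bR')$ does not depend on $\bR''$, so it comes out of the inner integral. What remains inside is exactly $\int \tilde\Delta^{(2)}_{ijk}(\bR',\bR'')E_k(\br+\bR'')\,\diff\bR''$. This equals the definition~\eqref{eq:marginal_def} once we identify $E_k$ with $E^+_k$. In~\eqref{eq:Pbulk3D_NL} the field is the continuous extension used throughout, with the superscript $+$ only marking the side at which traces are taken. Fubini also guarantees that the marginal kernel is finite for almost every $\bR'$ and integrable against $E_j(\br+\bR')$. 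Substituting back gives~\eqref{eq:Pmarginal}.

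\textbf{Step 3: linearity.} The ``linear constitutive relation'' claim is structural. For a fixed second factor $E^+$, the map $E\mapsto P^{(2),\Omega}$ given by~\eqref{eq:Pmarginal} is linear. The effective kernel depends on $\bR'$ and on $\br$ (through $E^+$), which is the general nonlocal linear form of the companion paper. I will also remark that, by~\eqref{eq:bulk_intrinsic_3D}, marginalizing over $\bR'$ instead gives the same result, so the choice of which argument to integrate out is immaterial.

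\textbf{Main obstacle.} The only nontrivial point is the integrability justification in Step~1, because the proposition itself fixes no function space for $E$. I would first try the polynomial-growth assumption together with the moment hypothesis. If that is too restrictive, for instance for plane waves, I would fall back on bounded $E$, where integrability of $\tilde\Delta^{(2)}$ alone suffices. A second, minor point is the notational identification of $E$ with $E^+$, which I would state explicitly.
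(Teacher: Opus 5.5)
Your proposal is correct and follows the paper's own argument: the paper's proof consists of substituting \eqref{eq:marginal_def} into \eqref{eq:Pmarginal} and invoking Fubini's theorem. Your Step~1 (absolute integrability from the moment hypothesis plus polynomial growth of the field) and your explicit identification of $E$ with $E^+$ in \eqref{eq:Pbulk3D_NL} spell out what the paper leaves implicit. One small slip: bounded fields are a special case of polynomially growing ones, so your ``fallback'' to bounded $E$ is more restrictive rather than less, and plane waves are already covered by your main assumption.
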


\begin{proof}
  Direct substitution of~\eqref{eq:marginal_def}
  into~\eqref{eq:Pmarginal} and use of
  Fubini's theorem.
\end{proof}

\subsection{Consequences for the surface susceptibility}

Applying the distributional thin-layer procedure to the marginalized
kernel, the nonlinear surface susceptibility $\mathcal{S}^{(2,0)}_{ijk}$
can be expressed as
\begin{equation}\label{eq:S20_marginal}
  \mathcal{S}^{(2,0)}_{ijk}\,E^+_j\,E^+_k
  = \mathcal{S}^{(1,\mathrm{eff})}_{ij}[E^+]\,E^+_j,
\end{equation}
where
\begin{equation}\label{eq:S1eff}
  \mathcal{S}^{(1,\mathrm{eff})}_{ij}[E^+]
  := \int_0^{+\infty}\!\int_{\R^2}
     \tilde\Delta^{(2\to 1)}_{ij}(\bR';\,E^+)\,
     \diff^2\bR'_\parallel\,\diff\Rp
\end{equation}
is the surface susceptibility of the effective linear problem with
kernel $\tilde\Delta^{(2\to 1)}_{ij}$.

\begin{corollary}
  All results of the linear theory~\cite{Zolla2026nonlocalopticalresponsesurface} --- the surface
  susceptibility formulas, the curvature corrections, the generalized
  Maxwell boundary conditions --- apply to the nonlinear problem
  through the substitution
  $\tilde\Delta^{(1)}_{ij} \to \tilde\Delta^{(2\to 1)}_{ij}(\cdot;E^+)$.
  The nonlinear surface problem is thus closed recursively: it reduces
  to a $\chi^{(1)}$ effective problem with a field-dependent kernel.
\end{corollary}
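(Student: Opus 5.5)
The plan is to transfer the linear theory to the effective kernel in three steps.

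\emph{Step 1: freeze the field.} Fix the field $E^+$ that enters the second argument; it is regarded as given data rather than as an unknown. Proposition~\ref{prop:marginal} then shows that $P^{(2),\Omega}_i$ is a linear functional of $E_j$ with the effective kernel $\tilde\Delta^{(2\to1)}_{ij}(\cdot;E^+)$.

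The same marginalization must also be applied to the interfacial part~\eqref{eq:deltaP3D_NL}. Using~\eqref{eq:bulk_intrinsic_3D}, the two cross cells $\Omega^c\times\Omega$ and $\Omega\times\Omega^c$ are exchanged by $(j,\bR')\leftrightarrow(k,\bR'')$. Hence, up to a factor of two absorbed into the definition of the effective kernel, both cells reduce to a single integral in which only the $\bR'$ variable is restricted to the exterior. This is exactly the structure of the linear boundary term of~\cite{Zolla2026nonlocalopticalresponsesurface}. The both-outside cell is $O(\ell^2)$ by the hierarchy argument already given, so it is pushed into the curvature terms.

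\emph{Step 2: verify the linear hypotheses for the effective kernel.} I will check that $\tilde\Delta^{(2\to1)}_{ij}$ satisfies the hypotheses of the linear theory.

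\begin{itemize}
\item \textbf{Rapid decay in $\bR'$.} This follows from Hypothesis~1 by dominated convergence, provided $E^+$ is bounded and smooth on the scale $\ell$.
\item \textbf{Support and bulk homogeneity.} Both are inherited from Hypotheses~2 and~3.
\end{itemize}

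Isotropy and centro-symmetry are \emph{not} inherited, because $E^+$ selects a direction. The corollary must therefore be read as invoking only those linear results that do not rely on isotropy: the general moment expansion, the thin-layer limit $\chi^{\Vmoy}\to\mathcal{M}\,\delta_{\dOmega}$, and the curvature expansion in $H$ and $K$. Concretely, I will expand $E^+_k(\br+\bR'')=E^+_k(\br)+R''_l\partial_lE^+_k+\cdots$ inside~\eqref{eq:marginal_def}. At leading order this gives
\[
\tilde\Delta^{(2\to1)}_{ij}=\Bigl(\int\tilde\Delta^{(2)}_{ijk}\,\diff\bR''\Bigr)E^+_k+O(\ell),
\]
a linear kernel with a constant coefficient vector, to which the linear theorems apply verbatim. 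Integrating over the exterior half-space then yields~\eqref{eq:S20_marginal}, and matching coefficients identifies $\mathcal{S}^{(2,0)}_{ijk}$.

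\emph{Step 3: boundary conditions.} The generalized Maxwell boundary conditions of the linear paper follow from inserting $-\varepsilon_0\mathcal{S}\,E^+\delta_{\dOmega}$ into Maxwell's equations and integrating across the layer. This is a purely linear-in-$P$ operation, so the substitution goes through directly, and the nonlinear surface polarization is obtained with $E^+$ entering twice.

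\emph{Main obstacle.} The hard step is justifying the exchange of the thin-layer limit with the marginal integration. The effective kernel depends on $\ell$ through $E^+$ evaluated at shifted points, and $E^+$ itself jumps across $\dOmega$. My plan is to use the exterior trace $E^+$ and its one-sided Taylor expansion, and to bound the remainder uniformly through the joint moments of Hypothesis~1. This shows that the error is $O(\ell^2)$, the same order as the curvature corrections.
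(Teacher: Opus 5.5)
Your route is the paper's: freeze one copy of the field, invoke Proposition~\ref{prop:marginal} (Fubini), and hand $\tilde\Delta^{(2\to1)}_{ij}$ to the linear thin-layer limit. The paper adds nothing beyond asserting that this last step applies.

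The genuine gap is your Step~2 claim that support and bulk homogeneity are inherited, and it is not a technicality. You face a dichotomy.
\begin{itemize}
\item \emph{Use~\eqref{eq:marginal_def} as written}, with $\bR''$ ranging over all of $\R^3$. The kernel is then homogeneous, up to its slow $\br$-dependence through $E^+$. But the linear boundary term built from it, $-\varepsilon_0\mathds{1}_\Omega(\br)\int\!\!\int\mathds{1}_{\Omega^c}(\br+\bR')\,\tilde\Delta^{(2)}_{ijk}\,E_j\,E^+_k\,\diff\bR'\,\diff\bR''$, is only the cell $\Omega^c\times\R^3$. The cross cell $\Omega\times\Omega^c$ is lost.
\item \emph{Keep the factor $\mathds{1}_\Omega(\br+\bR'')$}, as your Step~1 does. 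The representation is then exact, but the effective kernel varies with the distance from $\br$ to $\dOmega$ on the scale $\ell$, precisely in the layer that produces the surface term. The linear theorem, proved for $\mathds{1}_\Omega(\br)\mathds{1}_\Omega(\br')\tilde\Delta^{(1)}(\br-\br')$, does not apply.
\end{itemize}

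Your factor of two addresses the first case, but you may not ``absorb'' it. It is a correction to the statement: it conflicts with the literal substitution and with~\eqref{eq:S1eff}, which likewise restricts only $\bR'$. It is also still incomplete, because the both-outside cell is not $O(\ell^2)$. The pointwise factor $e^{-2d/\ell}$ integrates over the depth $d$ to $\ell/2$, the same order as $\int_0^\infty e^{-d/\ell}\,\diff d=\ell$ for a cross cell. For the 1D Yukawa kernel~\eqref{eq:Yukawa_1D} the depth integrals are $\tfrac32A_0^{(2)}\ell$ per cross cell and $\tfrac12A_0^{(2)}\ell$ for the both-outside cell.

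So the hierarchy argument you invoke fails. Switching between $\bR''\in\Omega-\br$ (Step~1) and $\bR''\in\R^3$ (Step~2) changes the leading coefficient. The correct bookkeeping is inclusion--exclusion,
$1-\mathds{1}_\Omega\otimes\mathds{1}_\Omega=\mathds{1}_{\Omega^c}\otimes 1+1\otimes\mathds{1}_{\Omega^c}-\mathds{1}_{\Omega^c}\otimes\mathds{1}_{\Omega^c}$.
With~\eqref{eq:bulk_intrinsic_3D} this gives twice the naive substitution minus the both-outside cell.

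Two further points.

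First, your concession that isotropy and evenness are lost is correct; the paper's one-line argument does not address it either. But it means you are proving a weaker statement than the corollary, which explicitly claims the linear surface-susceptibility formulas. The effective surface tensor $\mathcal{S}^{(1,\mathrm{eff})}_{ij}[E^+]$ is a general field-dependent tensor, not a pair $\chi^s_\parallel,\chi^s_\perp$. Nor does the linear result that the first-moment surface tensor vanishes by isotropy and evenness carry over. State this as a restriction, not as a ``reading'' of the corollary.

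Second, your own error estimate undercuts the curvature claim. Your ``constant coefficient vector'' $E^+_k(\br)$ is constant only on the scale $\ell$. The terms $\propto\partial_lE^+_k$ that you drop when freezing $E^+_k(\br+\bR'')$ enter the $\delta_{\dOmega}$ coefficient at order $\ell^2$, exactly the order of the $\nu^{(2)}H$ terms. So the curvature corrections cannot be imported by substitution alone; these gradient terms must be carried alongside them. Step~3 is fine once the correct surface source is in hand.
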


\begin{remark}[Extension to order $n$]
  The same argument applies to the $n$-th order response.  The
  $n$-th order bulk kernel $\tilde\Delta^{(n)}_{ij_1\cdots j_n}$
  is marginalized over $n-1$ field arguments to give an effective
  linear kernel $\tilde\Delta^{(n\to 1)}_{ij}$, and the surface
  susceptibility of the $n$-th order problem is expressed in terms
  of that of a $\chi^{(1)}$ effective problem.  The hierarchy is
  closed at every order.
\end{remark}

\section{Generalized nonlinear Maxwell boundary conditions}
\label{sec:BC_NL}

\subsection{Structure of the second-harmonic polarization}

At frequency $2\omega$, the material response is driven by the
fundamental field at $\omega$.  We work in the undepleted-pump
approximation: the fundamental field $\bE^\omega$ satisfies the
linear Maxwell equations with the generalized boundary conditions
of~\cite{Zolla2026nonlocalopticalresponsesurface}, and the second-harmonic field $\bE^{2\omega}$
is driven by the nonlinear polarization $\bP^{(2)}$ as a source.

The total second-harmonic polarization splits as
in~\eqref{eq:decomp_P3D_NL}:
\begin{equation}\label{eq:P2omega_total}
  \bP^{(2)}(\br)
  = \bP^{(2),\Omega}(\br)
  + \bP^{(2),\dOmega}(\br),
\end{equation}
where $\bP^{(2),\dOmega}$ is the interfacial contribution given
by~\eqref{eq:deltaP_NL_distrib}.

\subsection{Maxwell equations at $2\omega$}

The macroscopic Maxwell equations at $2\omega$ read:
\begin{align}
  \nabla\times\bE^{2\omega}
  &= 2i\omega\mu_0\bH^{2\omega},
  \label{eq:MaxFar_2w}\\
  \nabla\times\bH^{2\omega}
  &= -2i\omega\bigl(\bD^{(1),2\omega} + \bP^{(2)}\bigr),
  \label{eq:MaxAmp_2w}\\
  \nabla\cdot\bigl(\bD^{(1),2\omega} + \bP^{(2)}\bigr)
  &= 0,
  \label{eq:MaxGauss_2w}
\end{align}
where $\bD^{(1),2\omega} = \varepsilon_0\varepsilon(2\omega)\bE^{2\omega}$
is the linear displacement at $2\omega$.  The singular distribution
$\bP^{(2),\dOmega}$ supported on $\dOmega$ drives the jump
conditions for the second-harmonic field.

\subsection{Identification of boundary conditions}

Inserting~\eqref{eq:deltaP_NL_distrib} into the Maxwell
equations~\eqref{eq:MaxFar_2w}--\eqref{eq:MaxGauss_2w} and
applying the distributional differentiation rules
of~\cite{Zolla2026nonlocalopticalresponsesurface}, we identify the coefficients of
$\delta_{\dOmega}$ in each equation.

\paragraph{Faraday's law.}
The tangential second-harmonic electric field is continuous:
\begin{equation}\label{eq:BC_Far_2w}
  \bn\times\jump{\bE^{2\omega}} = \mathbf{0}.
\end{equation}

\paragraph{Amp\`{e}re's law.}
Collecting the $\delta_{\dOmega}$ coefficient:
\begin{equation}\label{eq:BC_Amp_2w}
  \bn\times\jump{\bH^{2\omega}}
  = -2i\omega\varepsilon_0\,
    \mathcal{S}^{(2,0)}_{ijk}\,n_k\,
    E^{+,\omega}_j\,E^{+,\omega}_k
  + O(\ell^2),
\end{equation}
where $E^{+,\omega}_j$ is the fundamental field evaluated at the
interface from the exterior side.

\paragraph{Gauss's law.}
Collecting the $\delta_{\dOmega}$ coefficient:
\begin{equation}\label{eq:BC_Gauss_2w}
  \jump{D^{(1),2\omega}_\perp}
  = -\varepsilon_0
    \Bigl[
      \bigl(\chisNLper + \nu^{(2),\perp} H\bigr)
      E^{+,\omega}_\perp\,E^{+,\omega}_\perp
      + \nabla_s\cdot
        \bigl(\chisNLpar\,\bE^{+,\omega}_\parallel
              \,E^{+,\omega}_\perp\bigr)
    \Bigr]
  + O(\ell^2).
\end{equation}

\begin{remark}[Structure of the boundary conditions]
  The nonlinear boundary conditions~\eqref{eq:BC_Amp_2w}
  and~\eqref{eq:BC_Gauss_2w} have the same structure as their
  linear counterparts~\cite{Zolla2026nonlocalopticalresponsesurface}: a jump in the tangential
  magnetic field and in the normal displacement, both driven by a
  surface source term.  The key difference is that the source is
  quadratic in the fundamental field $\bE^{+,\omega}$ rather than
  linear in the field at $2\omega$.
\end{remark}

\subsection{Special case: planar interface}

For $\dOmega = \{z=0\}$ with $H = K = 0$, the boundary conditions
reduce to:
\begin{align}
  \bn\times\jump{\bH^{2\omega}}
  &= -2i\omega\varepsilon_0\,
     \chisNLpar\,
     \bigl(E^{+,\omega}_\parallel\bigr)^2,
  \label{eq:BC_plane_H_2w}\\[6pt]
  \jump{D^{(1),2\omega}_\perp}
  &= -\varepsilon_0\,\chisNLper\,
     \bigl(E^{+,\omega}_\perp\bigr)^2
  - \varepsilon_0\,\chisNLpar\,
     \nabla_s\cdot
     \bigl(\bE^{+,\omega}_\parallel
           E^{+,\omega}_\perp\bigr).
  \label{eq:BC_plane_D_2w}
\end{align}
For $\chisNLpar = \chisNLper = 0$ (no surface),
conditions~\eqref{eq:BC_plane_H_2w}--\eqref{eq:BC_plane_D_2w}
reduce to the standard continuity conditions at $2\omega$,
consistently with the absence of bulk SHG in a centrosymmetric
medium.

\subsection{Special case: spherical interface}

For the sphere of radius $R_0$ ($H = 2/R_0$,
$\bn = \hat{\mathbf{r}}$):
\begin{align}
  \bn\times\jump{\bH^{2\omega}}
  &= -2i\omega\varepsilon_0\,
     \left(\chisNLpar
           + \frac{2\nu^{(2),\parallel}}{R_0}\right)
     \bigl(\bE^{+,\omega}_\parallel\bigr)^2
  + O\!\left(\frac{\ell^2}{R_0^2}\right),
  \label{eq:BC_sphere_H_2w}\\[6pt]
  \jump{D^{(1),2\omega}_\perp}
  &= -\varepsilon_0
     \left(\chisNLper
           + \frac{2\nu^{(2),\perp}}{R_0}\right)
     \bigl(E^{+,\omega}_\perp\bigr)^2
  + O\!\left(\frac{\ell^2}{R_0^2}\right).
  \label{eq:BC_sphere_D_2w}
\end{align}
The terms proportional to $1/R_0$ are the \emph{nonlinear Mie
corrections}: they shift the SHG intensity pattern of a plasmonic
nanosphere with respect to the flat-interface result, and become
significant for $R_0 \lesssim 10\ell$.

\section{Explicit kernel/surface combinations}
\label{sec:examples_NL}

We illustrate the general formulas on three analytically tractable
kernels, for the planar interface.

\subsection{Gaussian scalar kernel}

\paragraph{Kernel.}
$\tilde\Delta^{(2)}_{ijk}(\bR',\bR'')
= \dfrac{A_0^{(2)}}{\ell^5}\,e^{-(R'^2+R''^2)/\ell^2}\,
  (\delta_{ij}\,R''_k + \delta_{ik}\,R'_j)$,
where $A_0^{(2)} := \Delta_0^{(2)}\ell^5$ is the bulk amplitude
($\ell$-independent),
which satisfies centro-symmetry and intrinsic permutation
symmetry by construction.

\paragraph{Nonlinear surface susceptibilities.}
Performing the Gaussian integrals over the exterior half-space
$\Rp > 0$, $\Rpp > 0$:
\begin{equation}
  \chisNLpar
  = \chisNLper
  = \frac{\pi\,A_0^{(2)}}{4}.
\end{equation}
The equality $\chisNLpar = \chisNLper$ reflects the isotropy of
the scalar kernel, in direct analogy with the linear
case~\cite{Zolla2026nonlocalopticalresponsesurface}.

\paragraph{Second-harmonic surface polarization (planar).}
\begin{equation}
  P^{(2),\dOmega}_i
  = -\frac{\varepsilon_0\pi\,A_0^{(2)}}{4}\,
    \delta_{ij}\,\bigl(E^{+,\omega}_j\bigr)^2\,\delta(z).
\end{equation}

\subsection{Yukawa scalar kernel on a spherical interface}

\paragraph{Kernel.}
$\tilde\Delta^{(2)}_{ijk}(\bR',\bR'')
= \dfrac{A_0^{(2)}}{\ell^4}\,e^{-(R'+R'')/\ell}\,
  (R'\,R'')^{-1}\,(\delta_{ij}\,R''_k + \delta_{ik}\,R'_j)$,
with the same convention $A_0^{(2)} := \Delta_0^{(2)}\ell^5$.

\paragraph{Nonlinear surface susceptibilities.}
\begin{equation}
  \chisNLpar = \chisNLper = 2\pi\,A_0^{(2)}.
\end{equation}

\paragraph{Curvature correction on the sphere.}
From~\eqref{eq:BC_sphere_H_2w} with $H = 2/R_0$:
\begin{equation}
  \frac{|\delta P^{(2),\mathrm{curv}}|}
       {|\delta P^{(2),(0)}|}
  = \frac{2\nu^{(2),\parallel}}{R_0\,\chisNLpar}
  \sim \frac{12\ell^2}{R_0}.
\end{equation}
For $R_0 = 10$\,nm and $\ell = 0.3$\,nm, this gives
$\approx 1.1\times 10^{-2}$: the nonlinear Mie correction is of
order~$1\%$.

\subsection{Tensorial Lorentz kernel}

\paragraph{Kernel.}
\begin{equation}
  \tilde\Delta^{(2)}_{ijk}(\bR',\bR'')
  = \frac{A_0^{(2)}}{\ell^5\,(1+R'^2/\ell^2)^2(1+R''^2/\ell^2)^2}
    \left[
      \delta_{ij}\,R''_k
      + \delta_{ik}\,R'_j
      + \frac{\beta}{R'^2+\ell^2}
        \left(R'_i R'_j R''_k + R'_i R''_j R'_k\right)
    \right].
\end{equation}

\paragraph{Nonlinear surface susceptibilities (plane).}
\begin{align}
  \chisNLpar
  &= \frac{2\pi^4\,A_0^{(2)}(3+\beta)}{9},\\[4pt]
  \chisNLper
  &= \frac{\pi^4\,A_0^{(2)}(6+\beta)}{9}.
\end{align}
The surface anisotropy
$\chisNLper - \chisNLpar = -\pi^4A_0^{(2)}\beta/9$
has the same origin as in the linear case: it reflects the
tensorial structure of the bulk kernel, not the interface
geometry.

\begin{table}[ht]
\centering
\renewcommand{\arraystretch}{1.8}
\begin{tabular}{llcc}
\toprule
Kernel & Surface & $\chisNLpar$ & $\chisNLper$ \\
\midrule
Gaussian scalar
  & Plane
  & $\dfrac{\pi\,A_0^{(2)}}{4}$
  & $\dfrac{\pi\,A_0^{(2)}}{4}$ \\[8pt]
Yukawa scalar
  & Sphere $R_0$
  & $2\pi\,A_0^{(2)}$
  & $2\pi\,A_0^{(2)}$ \\[8pt]
Lorentz tensorial ($\beta$)
  & Plane
  & $\dfrac{2\pi^4\,A_0^{(2)}(3+\beta)}{9}$
  & $\dfrac{\pi^4\,A_0^{(2)}(6+\beta)}{9}$ \\
\bottomrule
\end{tabular}
\caption{Nonlinear surface susceptibilities $\chisNLpar$ and $\chisNLper$
  for the three kernel/surface combinations treated explicitly. Scalar
  kernels (Gaussian, Yukawa) yield isotropic surface response
  ($\chisNLpar=\chisNLper$); the tensorial Lorentz kernel breaks this
  isotropy through the anisotropy parameter~$\beta$, which controls the
  surface anisotropy
  $\chisNLper-\chisNLpar=-\pi^4A_0^{(2)}\beta/9$.}
\label{tab:NL_explicit}
\end{table}

\section{A simple worked example: surface SHG against a
         unit-index background}
\label{sec:example_shg}

To illustrate the formalism in its simplest setting, we briefly
revisit the case treated
in~\parencite{ZollaWild1,ZollaWild2,ZollaWild3}, in which the
entire second-harmonic signal is generated by a singular surface
$\chi^{(2)}$ embedded in a background of unit refractive index.
The starting point is the coupled system at the fundamental
frequency $\omega$ and at the second-harmonic frequency
$2\omega$:
\begin{equation}\label{sys:chi2:d1=2}
  \left\{
  \begin{array}{l}
    \Mlin_\omega\,\brE^\omega = -i\,\omega\mu_0\,\brJ^\omega, \\[6pt]
    \Mlin_{2\omega}\,\brE^{2\omega} + \dfrac{(2\omega)^2}{c^2}\,
    \multiE{\brE^\omega,\brE^\omega} = 0,
  \end{array}
  \right.
\end{equation}
where $\Mlin_\omega$ and $\Mlin_{2\omega}$ are the linear Maxwell operators at
$\omega$ and $2\omega$ respectively, and
$\multiE{\cdot\,,\cdot}$ denotes the bilinear nonlinear coupling
driven by the surface $\chi^{(2)}$. To be more precise.
\[
 \Mlin_{p\omega} \brE^{p\omega} := - \nabla \times \nabla \times \brE^{p\omega} + \frac{(p \, \omega)^2}{c^2} \, \varepsilon_r(\mathbf{s},p \, \omega) \, \brE^{p\omega} \, .
 \]
 and where $\multiE{\brE^{\omega}, \brE^{\omega}} $ is the following bilinear form~:
 \begin{equation}
\multiE{\brE^{\omega}, \brE^{\omega}}= \chi_{(2)}(\omega,\omega)  {:} \brE^{\omega}\otimes \brE^{\omega} \, ,
\end{equation}

\paragraph{One-dimensional reduction.}
Restricting to one dimension with normal incidence and an
\emph{ad hoc} polarization, and assuming for simplicity that the
linear refractive index equals~$1$ everywhere,
system~\eqref{sys:chi2:d1=2} reduces to
\begin{equation}\label{sys:chi2:d1=2:scalar}
  \left\{
  \begin{array}{l}
    u_1''(x) + k_0^2\,u_1(x) = 0, \\[6pt]
    u_2''(x) + (2k_0)^2\,u_2(x)
    + (2k_0)^2\,u_1^2(0)\,\chi_0^{(2),S}\,\delta(x) = 0,
  \end{array}
  \right.
\end{equation}
the surface $\chi^{(2)}$ being localised at $x=0$.

\paragraph{Solution.}
Imposing outgoing-wave conditions, the first equation
in~\eqref{sys:chi2:d1=2:scalar} gives $u_1(x) = A\,e^{ik_0 x}$, and
the second equation becomes
\[
  u_2''(x) + 4k_0^2\,u_2(x)
  + 4k_0^2\,A^2\,\chi_0^{(2),S}\,\delta(x) = 0.
\]
The solution takes two distinct forms according to the sign of
$x$:
\[
  u_2^-(x) = A_2^-\,e^{-2ik_0 x} \quad (x<0),
  \qquad
  u_2^+(x) = A_2^+\,e^{+2ik_0 x} \quad (x>0).
\]
The matching conditions at $x=0$, namely $\jump{u_2}_0 = 0$ and
$\jump{u_2'}_0 = -4k_0^2\,A^2\,\chi_0^{(2),S}$, yield
$A_2^+ = A_2^-$ and finally
\[
  A_2^+ \;=\; A_2^- \;=\; i\,k_0\,A^2\,\chi_0^{(2),S}.
\]
In summary,
\begin{equation}\label{eq:u2_final_example}
  u_2(x) \;=\; i\,k_0\,A^2\,\chi_0^{(2),S}\,e^{2ik_0|x|}.
\end{equation}
The result is consistent with the outgoing-wave Green function
$G(x)=e^{2ik_0|x|}/(4ik_0)$ for the operator
$\partial_x^2 + (2k_0)^2$, since
$u_2(x) = -4k_0^2\,A^2\,\chi_0^{(2),S}\,G(x)
       = i\,k_0\,A^2\,\chi_0^{(2),S}\,e^{2ik_0|x|}$.

\begin{figure}[h]
  \begin{center}
    \begin{tikzpicture}[scale=1.2]


\fill[blue!6] (-4,-1.6) rectangle (0,1.8);
\fill[green!6] (0.,-1.6) rectangle (4,1.8);

\draw[line width=4pt,orange!80!black] (0,-1.6) -- (0,1.8);

\draw[->] (-4,0) -- (4,0) node[right] {$x$};

\node at (-2.6,1.5) {vacuum};
\node at (2.4,1.5) {linear medium ($n=1$)};

\node[orange!80!black] at (0.45,1.0) {$\chi^{(2),S}=\chi_0^{(2),S} \, \delta(x)$};


\draw[domain=-3.5:-0.2,smooth,variable=\x,blue,thick]
plot ({\x},{0.35*sin(4*\x r)+0.5});

\draw[->,blue,thick] (-3.7,0.5) -- (-3.3,0.5);

\node[blue] at (-2.2,1.05) {$A\, e^{ik_0 x}$};
\node[blue] at (-2.2,0.7) {$\omega$};

\draw[->,blue] (-2.6,0.2) -- (-2.0,0.2);
\node[blue] at (-2.3,0.05) {$k_0$};


\draw[domain=-0.2:-3.5,smooth,variable=\x,red!80!black,thick]
plot ({\x},{0.30*sin(8*\x r)-0.6});

\draw[->,red!80!black,thick] (-0.2,-0.6) -- (-0.8,-0.6);

\node[red!80!black] at (-2,-1.15) {$A_2^-\, e^{-2ik_0 x}$};
\node[red!80!black] at (-2,-1.45) {$2\omega$};

\draw[->,red!80!black] (-1.6,-0.9) -- (-2.2,-0.9);
\node[red!80!black] at (-0.4,-0.85) {$2k_0$};


\draw[domain=0.2:3.5,smooth,variable=\x,red!80!black,thick]
plot ({\x},{0.30*sin(8*\x r)-0.6});

\draw[->,red!80!black,thick] (0.2,-0.6) -- (0.8,-0.6);

\node[red!80!black] at (2,-1.15) {$A_2^+\, e^{2ik_0 x}$};
\node[red!80!black] at (2,-1.45) {$2\omega$};

\draw[->,red!80!black] (1.6,-0.9) -- (2.2,-0.9);
\node[red!80!black] at (0.4,-0.85) {$2k_0$};

\end{tikzpicture}
  \end{center}
  \caption{An idealised medium whose $\chi^{(2)}$ reduces to a
    singular distribution localised at $x=0$.  The fundamental
    field $u_1$ at frequency $\omega$ propagates from the left,
    and the second-harmonic field $u_2$ at $2\omega$ radiates
    outward from the interface on both sides.}
  \label{fig:chi2_simple}
\end{figure}

\begin{figure}[h]
  \begin{center}
    \begin{tikzpicture}[scale=1.2]

\draw[->] (-4,0) -- (4,0) node[right] {$x$};

\fill[orange!80!black] (0,0) circle (0.12);
\node[orange!80!black] at (0.6,0.4) {$\chi^{(2),S}$};

\node at (0,-0.5) {$\delta(x)$};

\draw[blue,thick,decorate,decoration={snake,amplitude=1.2pt,segment length=6pt}]
    (-3,0.8) -- (-0.12,0.08);
\draw[blue,thick,decorate,decoration={snake,amplitude=1.2pt,segment length=6pt}]
    (-3,0.4) -- (-0.12,0.08);
\node[blue] at (-2,1.1) {$\omega$};
\node[blue] at (-2,0.7) {$\omega$};

\draw[red,thick,decorate,decoration={snake,amplitude=1.2pt,segment length=6pt}]
    (0.12,0.08) -- (3,0.8);
\draw[red,thick,decorate,decoration={snake,amplitude=1.2pt,segment length=6pt}]
    (0.12,0.08) -- (-3,-0.8);

\node[red] at (2,1.1) {$2\omega$};
\node[red] at (2,0.8) {$2k_0$};

\node[red] at (-2,-1.1) {$2\omega$};
\node[red] at (-2,-0.8) {$2k_0$};

\node[blue] at (-3.5,0.8) {incoming photons};
\node[red] at (3.5,0.8) {generated photons};

\draw[->,blue] (-2.5,0.6) -- (-2.1,0.2);
\draw[->,red] (1.5,0.6) -- (1.9,0.2);
\draw[->,red] (-1.5,-0.6) -- (-1.9,-0.2);

\end{tikzpicture}
  \end{center}
  \caption{Photon-level picture: two photons at frequency $\omega$
    combine at the surface $\chi^{(2)}$ vertex and produce one
    outgoing photon at frequency $2\omega$ on each side of the
    interface.}
  \label{fig:feynman}
\end{figure}

\section{A non-trivial worked example: bulk–surface competition
         in a Yukawa slab}
\label{sec:example_yukawa}

The previous example reduced to a surface at a single interface with
unit refractive index, which is useful pedagogically but hides the
interplay between bulk and surface contributions that is the central
result of the present paper.  We now consider a \emph{slab} of
thickness~$d$, still in a unit-index background, whose second-order
response is governed by the one-dimensional Yukawa kernel.  This
example is explicitly solvable and exhibits a non-trivial competition
between the two surface contributions (one at each interface) and the
distributed bulk source.

\paragraph{Geometry and kernel.}
Let $\Omega = (0,d)$.  We take the scalar 1D Yukawa kernel
\begin{equation}\label{eq:Yukawa_1D}
  \tilde\Delta^{(2)}(Z',Z'')
  = \frac{A_0^{(2)}}{\ell^2}\,e^{-(|Z'|+|Z''|)/\ell},
\end{equation}
where $A_0^{(2)} := \Delta_0^{(2)}\ell^2$ is the bulk amplitude --- an
$\ell$-independent quantity with the dimension of $\chi^{(2)}$.  The
kernel satisfies the intrinsic permutation
symmetry~\eqref{eq:bulk_intrinsic_1D} and is non-centrosymmetric (it
is even under simultaneous reversal, not odd), so that both bulk and
surface SHG are present.

\paragraph{Yukawa kernel coefficients.}
The bulk and surface coefficients are computed directly from the
definitions~\eqref{eq:chi2_jk_bulk} and~\eqref{eq:B0_def}.

\noindent\textit{Bulk.}
\begin{equation}\label{eq:Yukawa_chi00}
  \chi^{(2),\Omega}_{00}
  = \int_{-\infty}^{+\infty}\!\int_{-\infty}^{+\infty}
    \frac{A_0^{(2)}}{\ell^2}\,e^{-(|Z'|+|Z''|)/\ell}\,\diff Z'\,\diff Z''
  = \frac{A_0^{(2)}}{\ell^2}\,(2\ell)^2
  = 4A_0^{(2)}.
\end{equation}

\noindent\textit{Surface.}  Using the symmetry $\tilde\Delta^{(2)}(Z',Z'') =
\tilde\Delta^{(2)}(Z'',Z')$:
\begin{equation}\label{eq:Yukawa_B0}
  B_0
  = \int_0^{+\infty}\!\int_0^{+\infty}
    (Z'+Z'')\,\frac{A_0^{(2)}}{\ell^2}\,e^{-(Z'+Z'')/\ell}\,\diff Z'\,\diff Z''
  = 2A_0^{(2)}\cdot\ell
  = 2A_0^{(2)}\ell.
\end{equation}
The ratio $B_0/\chi^{(2),\Omega}_{00} = \ell/2$ has the dimension of a
length, as it must.

\paragraph{Undepleted pump.}
With unit refractive index everywhere, the pump satisfies
$u_1{}'' + k_0^2 u_1 = 0$ and propagates without
reflection:
\begin{equation}\label{eq:pump_slab}
  u_1(z) = A\,e^{ik_0 z}.
\end{equation}
The pump amplitude at the two interfaces is therefore $u_1(0) = A$
and $u_1(d) = A\,e^{ik_0 d}$.

\paragraph{Second-harmonic equation.}
Let $K := 2k_0$ be the SHG wave number.  Setting $u_2 := u^{2\omega}$,
the SHG equation reads
\begin{equation}\label{eq:SHG_slab}
  u_2'' + K^2 u_2
  = -K^2\!\left[
      \chi^{(2),\Omega}_{00}\,A^2\,e^{iKz}\,\mathbf{1}_{[0,d]}(z)
      + B_0\,A^2\,\delta(z)
      + B_0\,A^2\,e^{iKd}\,\delta(z-d)
    \right].
\end{equation}
The three source terms are: (i) the \emph{bulk source}, distributed
throughout the slab and phase-matched to the pump since $K = 2k_0$
exactly; (ii) the \emph{surface source at $z=0$}, proportional to
$[u_1(0)]^2$; (iii) the \emph{surface source at $z=d$},
proportional to $[u_1(d)]^2 = A^2 e^{iKd}$.

\paragraph{Solution.}
With outgoing-wave conditions, the solution is expressed via the
1D Green's function $G(z-z') = e^{iK|z-z'|}/(2iK)$.

\noindent\textit{Transmitted field} ($z > d$).
Since $K = 2k_0$, the bulk integral simplifies dramatically: for
$z > d > z'$,
\[
  \int_0^d G(z-z')\,e^{iKz'}\,\diff z'
  = \frac{e^{iKz}}{2iK}\int_0^d \diff z' = \frac{d\,e^{iKz}}{2iK}.
\]
For the two surface delta-sources at $z' = 0$ and $z' = d$, the
Green's function gives $e^{iKz}/(2iK)$ and $e^{iK(z-d)} \cdot
e^{iKd}/(2iK) = e^{iKz}/(2iK)$ respectively.  Summing all three
contributions:
\begin{equation}\label{eq:u2_slab_trans}
  u_2(z)
  = \frac{iK}{2}\,A^2\,e^{iKz}\!
    \left[2B_0 + K d\,\chi^{(2),\Omega}_{00}\right], \quad z > d.
\end{equation}
Inserting the Yukawa coefficients~\eqref{eq:Yukawa_chi00}
and~\eqref{eq:Yukawa_B0}:
\begin{equation}\label{eq:u2_slab_Yukawa}
  u_2(z)
  = 4ik_0\,A_0^{(2)}\,\bigl(\ell + 2k_0 d\bigr)\,
    A^2\,e^{2ik_0 z}, \qquad z > d.
\end{equation}

\noindent\textit{Reflected field} ($z < 0$).
The two surface sources contribute coherently at $e^{-iKz}$, while the
bulk source contributes with a $d$-dependent phase factor:
\begin{equation}\label{eq:u2_slab_refl}
  u_2(z)
  = A^2\,e^{-iKz}\!\left[
      \frac{iK B_0}{2}\,\bigl(1 + e^{2iKd}\bigr)
      + \frac{\chi^{(2),\Omega}_{00}}{4}\,\bigl(e^{2iKd}-1\bigr)
    \right], \quad z < 0.
\end{equation}
Unlike the transmitted field, the reflected SHG \emph{oscillates}
with~$d$, because the pump and SHG have the same phase velocity
in vacuum (unit index) so there is no bulk phase mismatch in
transmission, whereas in reflection the two counter-propagating
waves accumulate a relative phase $2Kd$ across the slab.

\paragraph{Surface versus bulk: the crossover length.}
The central result is the factor $(\ell + 2k_0 d)$ in~\eqref{eq:u2_slab_Yukawa}:
\begin{itemize}
  \item The term $\ell$ collects the two surface contributions
    ($2B_0 = 4A_0^{(2)}\ell$)
    and is \emph{independent of~$d$}.
  \item The term $2k_0 d$ collects the bulk contribution
    ($Kd\,\chi^{(2),\Omega}_{00} = 2k_0 d \times 4A_0^{(2)}$)
    and grows \emph{linearly with~$d$} (perfect phase matching).
\end{itemize}
The crossover length at which bulk and surface contribute equally is
\begin{equation}\label{eq:dc_Yukawa}
  d_c := \frac{\ell}{2k_0} = \frac{\ell\lambda}{4\pi}.
\end{equation}
For optical wavelengths ($\lambda \approx 500$\,nm) and a typical
nonlocal range $\ell \approx 0.3$\,nm, one finds $d_c \approx 12$\,pm:
even a monolayer-thick slab ($d \sim \ell$) operates in the bulk-dominated
regime.  The surface contribution nevertheless survives as a
$d$-\emph{independent} offset, accessible experimentally by measuring
the SHG intensity as a function of slab thickness and extrapolating to
$d = 0$.

\paragraph{Limiting cases.}
\begin{itemize}
  \item \textbf{$d \to 0$ (two coincident surfaces).}
    Equation~\eqref{eq:u2_slab_Yukawa} reduces to
    $u_2 = 4ik_0A_0^{(2)}\ell A^2 e^{2ik_0 z}$,
    which is exactly \emph{twice} the result~\eqref{eq:u2_final_example}
    of the previous example (one interface with coefficient
    $\chi_0^{(2),S} = B_0 = 2A_0^{(2)}\ell$, amplitude
    $u_2 = ik_0 B_0 A^2 e^{2ik_0|z|}$, so two such interfaces give
    $2ik_0 B_0 A^2 e^{2ik_0 z}$), confirming internal consistency.
  \item \textbf{$d \gg d_c$.}
    The SHG amplitude grows as $|u_2| \propto 2k_0 d$: phase-matched
    bulk SHG accumulates coherently, independent of the surface geometry.
\end{itemize}

\begin{remark}[Role of unit-index background]
  The unit-index assumption ($n = 1$ inside the slab) is crucial for
  the phase-matching condition $K = 2k_0$, which makes the transmitted
  bulk integral simply proportional to~$d$.  For $n \ne 1$, the pump
  wave number inside the slab becomes $nk_0$, and the bulk source
  oscillates as $e^{2ink_0 z}$ while the SHG propagates at $Kz = 2k_0 z$:
  the coherence length $L_c = \pi/\bigl|2nk_0 - K\bigr| = \lambda/\bigl[4(n-1)\bigr]$
  limits the build-up, and the linear growth in~$d$ becomes oscillatory.
  This Maker-fringe behaviour is standard in nonlinear optics~\cite{Maker1962}; the present
  framework recovers it automatically from the Green's function while
  keeping the surface terms unchanged.
\end{remark}

\section{Surface-only SHG from a centrosymmetric Yukawa slab}
\label{sec:example_centro}

We now treat the complementary case: the same slab geometry
($\Omega = (0,d)$, unit-index background, undepleted pump
$u_1(z) = Ae^{ik_0 z}$), but with a \emph{centrosymmetric}
kernel.  This is the physically standard scenario for SHG at
the surface of a material whose bulk point group contains
inversion (silicon, glass, most cubic crystals): the bulk
second-order response vanishes by symmetry, and SHG is generated
solely by the two interfaces where inversion is broken.

\paragraph{Kernel.}
We take the \emph{odd Yukawa} kernel:
\begin{equation}\label{eq:Yukawa_odd}
  \tilde\Delta^{(2)}(Z',Z'')
  = \frac{A_0^{(2)}}{\ell^3}\,(Z'+Z'')\,
    e^{-(|Z'|+|Z''|)/\ell}.
\end{equation}
This kernel satisfies:
\begin{itemize}
  \item \textbf{Intrinsic permutation symmetry}: $\tilde\Delta^{(2)}(Z',Z'')
    = \tilde\Delta^{(2)}(Z'',Z')$ (since $Z'+Z'' = Z''+Z'$). 
  \item \textbf{Centro-symmetry} (Hypothesis~4 of
    Section~\ref{sec:1D_NL}): $\tilde\Delta^{(2)}(-Z',-Z'')
    = -(Z'+Z'')/\ell \cdot e^{-(|Z'|+|Z''|)/\ell}
    = -\tilde\Delta^{(2)}(Z',Z'')$. 
\end{itemize}

\paragraph{Vanishing of the bulk coefficient.}
Since $Z'\mapsto Z'e^{-|Z'|/\ell}$ is odd, $\int_{\R} Z'e^{-|Z'|/\ell}\diff Z'=0$,
and likewise for $Z''$; hence
\begin{equation}\label{eq:chi00_centro}
  \chi^{(2),\Omega}_{00} = 0.
\end{equation}
There is \textbf{no bulk SHG}.

\paragraph{Surface coefficient.}
Using $\int_0^\infty Z^n e^{-Z/\ell}\diff Z = n!\,\ell^{n+1}$:
\begin{equation}\label{eq:B0_centro}
  B_0^{(\mathrm{c})}
  = \frac{A_0^{(2)}}{\ell^3}
    \int_0^{+\infty}\!\int_0^{+\infty}
    (Z'+Z'')^2\,e^{-(Z'+Z'')/\ell}\,\diff Z'\,\diff Z''
  = \frac{A_0^{(2)}}{\ell^3}\cdot 6\ell^4
  = 6A_0^{(2)}\ell.
\end{equation}

\paragraph{Second-harmonic equation.}
With $K = 2k_0$, the SHG equation now contains \emph{only surface
sources}:
\begin{equation}\label{eq:SHG_centro}
  u_2'' + K^2 u_2
  = -K^2 B_0^{(\mathrm{c})}\,A^2\,
    \bigl[\delta(z) + e^{iKd}\,\delta(z-d)\bigr].
\end{equation}

\paragraph{Solution.}
Via the Green's function $G(z-z') = e^{iK|z-z'|}/(2iK)$:

\noindent\textit{Transmitted field} ($z > d$).
Both surface sources give $e^{iKz}/(2iK)$ for $z > d$:
\begin{equation}\label{eq:u2_centro_trans}
  u_2(z)
  = iK\,B_0^{(\mathrm{c})}\,A^2\,e^{iKz}
  = 12ik_0\,A_0^{(2)}\ell\,A^2\,e^{2ik_0 z},
  \qquad z > d.
\end{equation}
The transmitted SHG amplitude is \textbf{independent of the slab
thickness~$d$}.  This is the hallmark of surface-only SHG in a
centrosymmetric medium: no matter how thick the slab, the two
interfaces contribute equally and coherently in transmission,
while the bulk remains silent.

\noindent\textit{Reflected field} ($z < 0$).
The source at $z=0$ gives $e^{-iKz}/(2iK)$; the source at $z=d$
gives $e^{iKd}\cdot e^{iK(d-z)}/(2iK) = e^{i2Kd}\,e^{-iKz}/(2iK)$:
\begin{equation}\label{eq:u2_centro_refl}
  u_2(z)
  = \frac{iK\,B_0^{(\mathrm{c})}\,A^2\,e^{-iKz}}{2}
    \bigl(1 + e^{2iKd}\bigr)
  = iK\,B_0^{(\mathrm{c})}\,A^2\,
    e^{i(Kd-Kz)}\cos(Kd),
  \qquad z < 0.
\end{equation}
The reflected SHG \textbf{intensity} therefore oscillates as
\begin{equation}\label{eq:Maker_centro}
  |u_2(z)|^2 = K^2\,\bigl[B_0^{(\mathrm{c})}\bigr]^2 A^4\,
               \cos^2(2k_0 d).
\end{equation}
These are the \emph{Maker fringes} of a centrosymmetric slab~\cite{Maker1962}.

\paragraph{Comparison with the non-centrosymmetric case.}
Table~\ref{tab:centro_vs_noncentro} summarises the two regimes.
\begin{table}[ht]
  \centering
  \caption{Transmitted and reflected SHG from a Yukawa slab.}
  \label{tab:centro_vs_noncentro}
  \small
  \begin{tabular}{|l|c|c|}
    \toprule
    & Non-centrosymmetric & Centrosymmetric \\
    \midrule
    Kernel & $\tfrac{A_0^{(2)}}{\ell^2}\,e^{-(|Z'|+|Z''|)/\ell}$ (even)
           & $\tfrac{A_0^{(2)}}{\ell^3}(Z'\!+\!Z'')e^{-(|Z'|+|Z''|)/\ell}$ (odd)\\[4pt]
    Bulk $\chi^{(2),\Omega}_{00}$
           & $4A_0^{(2)} \neq 0$
           & $0$ \\[4pt]
    Surface $B_0$
           & $2A_0^{(2)}\ell$
           & $6A_0^{(2)}\ell$ \\[4pt]
    $|u_2^{\rm trans}|$ & $\propto \ell + 2k_0 d$ (grows with $d$)
                        & $= iKB_0^{(\mathrm{c})}A^2$ (constant in $d$) \\[4pt]
    $|u_2^{\rm refl}|^2$ & complex interference & $\propto \cos^2(2k_0 d)$ \\
    \bottomrule
  \end{tabular}
\end{table}
The $d$-independence of the transmitted SHG and the $\cos^2(2k_0 d)$
pattern of the reflected SHG are the two experimental signatures of
a centrosymmetric bulk with broken surface symmetry.

\section{Conclusion}
\label{sec:conclu_NL}

We have extended the distributional framework of the companion
paper~\cite{Zolla2026nonlocalopticalresponsesurface} to the second-order nonlinear case.  Starting
from the most general tensorial nonlocal second-order constitutive
relation and applying a spatial moment expansion combined with a
distributional thin-layer limit, we have shown that the nonlinear
interfacial response condenses, at leading order, into two scalar
nonlinear surface susceptibilities $\chisNLpar$ and $\chisNLper$,
given by explicit quadrature formulas over the exterior half-space.

\paragraph{Key results.}

Three points deserve emphasis.

\begin{enumerate}

  \item \textbf{Symmetry breaking at the surface generates SHG.}
    For a centrosymmetric bulk medium, the bulk second-order response
    vanishes by symmetry, and the entire second-harmonic signal
    originates from the interfacial layer.  The nonlinear surface
    susceptibilities $\chisNLpar$ and $\chisNLper$ are constructive
    generalizations of the phenomenological surface parameters
    introduced by Bloembergen and coworkers, here derived directly
    from the bulk nonlinear kernel.

  \item \textbf{The marginal integration principle.}
    The nonlinear surface problem reduces recursively to an effective
    linear one: by integrating out one field argument of the nonlinear
    kernel, one obtains a field-dependent effective linear kernel
    $\tilde\Delta^{(2\to 1)}_{ij}$, to which all results
    of~\cite{Zolla2026nonlocalopticalresponsesurface} apply.  This closes the hierarchy at every
    order: the $n$-th order surface susceptibility reduces always to
    a $\chi^{(1)}$ effective problem.

  \item \textbf{Curvature corrections are universal.}
    The same hierarchy as in the linear case --- leading
    $\delta_{\dOmega}$, vanishing first-order correction by
    isotropy, curvature terms at order $\ell^2$ --- holds for the
    nonlinear response.  The nonlinear Mie corrections on a
    nanosphere are of relative order $\ell^2/R_0$, i.e., of the
    same magnitude as their linear counterparts.

\end{enumerate}

\paragraph{Outlook.}

The non-degenerate case $\omega_1 + \omega_2 \to \omega_3$ follows
by desymmetrization: one replaces $\tilde\Delta^{(2)}_{ijk}(\bR',\bR'')$
by its symmetrized version
$\tfrac{1}{2}[\tilde\Delta^{(2)}_{ijk}(\bR',\bR'')
+ \tilde\Delta^{(2)}_{ikj}(\bR'',\bR')]$ and the field $\bE^{+,\omega}$
by two distinct fields $\bE^{+,\omega_1}$ and $\bE^{+,\omega_2}$.
All formulas carry over with this substitution.

The extension to third-order ($\chi^{(3)}$) processes such as
four-wave mixing and two-photon absorption follows the same pattern:
the marginal integration principle reduces the $\chi^{(3)}$ surface
problem to a $\chi^{(2\to 1)}$ effective problem, and thence to a
$\chi^{(1)}$ effective problem by a second marginal integration.
This recursive structure is the deepest result of the present work,
and its full exploitation for $\chi^{(n)}$ will be the subject of a
subsequent paper.

\appendix
\section{Centrosymmetry, polar vectors, and the parity of
         nonlocal kernels}
\label{app:centrosymmetry}

The conditions imposed on the nonlocal kernels
$\tilde\Delta^{(1)}_{ij}$ ($\tilde\Delta^{(1)}_{ij}(-\bR)=\tilde\Delta^{(1)}_{ij}(\bR)$)  and $\tilde\Delta^{(2)}_{ijk}$ ($\tilde\Delta^{(2)}_{ijk}(-\bR',-\bR'')=-\tilde\Delta^{(2)}_{ijk}(\bR',\bR'')$ ) by centrosymmetry are not identical, and understanding why requires
a careful distinction between the geometric nature of the fields
involved.  This appendix makes this distinction precise, and places
it in the framework of differential geometry.

\subsection{Vectors, covectors, and the inversion map}

In the language of differential geometry, the physical fields
appearing in the constitutive relations are not all of the same
geometric type.

\paragraph{The electric field $\bE$ is a covector (1-form).}
It is defined as the gradient of the electric potential $\varphi$:
$\bE = -\nabla\varphi$.  Under the inversion map
$\iota : \br \mapsto -\br$, the potential $\varphi$ is a scalar
(it does not change sign), while its gradient acquires a sign:
\begin{equation}
  \iota^*\bE = -\bE.
\end{equation}
More precisely, $\bE$ is a \emph{polar} covector: it changes sign
under inversion.

\paragraph{The polarization $\bP$ is a polar vector.}
It is defined as an electric dipole moment per unit volume.  A
dipole moment $\mathbf{p} = q\,\mathbf{d}$ involves a charge $q$
(scalar, invariant under inversion) and a displacement vector
$\mathbf{d}$ (polar vector, changes sign under inversion).  Hence:
\begin{equation}
  \iota^*\bP = -\bP.
\end{equation}

\paragraph{The displacement vector $\bR'$.}
The vector $\bR' = \br' - \br$ is a polar vector:
$\iota^*\bR' = -\bR'$.

These three sign changes are the only ingredients needed to
determine the parity of the kernels.

\subsection{Centrosymmetry as an invariance condition}

A material is \emph{centrosymmetric} if its physical properties are
invariant under the inversion map $\iota$.  For the constitutive
relation, this means that if the field configuration
$\bE(\br')$ produces the polarization $\bP(\br)$, then the
inverted field configuration $\tilde\bE(\br') := \bE(-\br')$
must produce the inverted polarization $\tilde\bP(\br) :=
\bP(-\br)$.

Applying this to the general nonlocal relation
\begin{equation}\label{eq:constit_app}
  P_i(\br)
  = \varepsilon_0\int_{\R^3}
    \tilde\Delta^{(1)}_{ij}(\bR')\,E_j(\br+\bR')\,\diff\bR',
\end{equation}
and using $\iota^*\bP = -\bP$, $\iota^*\bE = -\bE$,
$\iota^*\bR' = -\bR'$:
\begin{equation}
  -P_i(\br)
  = \varepsilon_0\int_{\R^3}
    \tilde\Delta^{(1)}_{ij}(-\bR')\,
    \bigl(-E_j(\br+\bR')\bigr)\,\diff\bR'.
\end{equation}
The two minus signs cancel, leaving:
\begin{equation}\label{eq:parity1}
  \tilde\Delta^{(1)}_{ij}(-\bR')
  = \tilde\Delta^{(1)}_{ij}(\bR').
\end{equation}
The linear kernel is \emph{even} under inversion.  This is the
correct mathematical expression of centrosymmetry for a first-order
nonlocal kernel.

\subsection{The second-order case}

For the second-order constitutive relation
\begin{equation}\label{eq:constit2_app}
  P^{(2)}_i(\br)
  = \varepsilon_0\int_{\R^3}\!\int_{\R^3}
    \tilde\Delta^{(2)}_{ijk}(\bR',\bR'')\,
    E_j(\br+\bR')\,E_k(\br+\bR'')\,
    \diff\bR'\,\diff\bR'',
\end{equation}
the same invariance condition gives, using $\iota^*\bP = -\bP$,
$\iota^*\bE = -\bE$:
\begin{equation}
  -P^{(2)}_i(\br)
  = \varepsilon_0\int_{\R^3}\!\int_{\R^3}
    \tilde\Delta^{(2)}_{ijk}(-\bR',-\bR'')\,
    \bigl(-E_j(\br+\bR')\bigr)\,
    \bigl(-E_k(\br+\bR'')\bigr)\,
    \diff\bR'\,\diff\bR''.
\end{equation}
The two factors $(-E_j)(-E_k) = +E_j E_k$ now have the
\emph{same} sign, so the equation becomes:
\begin{equation}
  -P^{(2)}_i(\br)
  = \varepsilon_0\int_{\R^3}\!\int_{\R^3}
    \tilde\Delta^{(2)}_{ijk}(-\bR',-\bR'')\,
    E_j(\br+\bR')\,E_k(\br+\bR'')\,
    \diff\bR'\,\diff\bR''.
\end{equation}
Comparing with~\eqref{eq:constit2_app} yields:
\begin{equation}\label{eq:parity2}
  \tilde\Delta^{(2)}_{ijk}(-\bR',-\bR'')
  = -\tilde\Delta^{(2)}_{ijk}(\bR',\bR'').
\end{equation}
The second-order kernel is \emph{odd} under simultaneous inversion
of both displacement vectors.

\subsection{The general rule}

The pattern is now clear.  For a kernel of order $n$:
\begin{equation}\label{eq:parity_n}
  \tilde\Delta^{(n)}_{ij_1\cdots j_n}
  (-\bR_1,\ldots,-\bR_n)
  = (-1)^{n+1}\,
    \tilde\Delta^{(n)}_{ij_1\cdots j_n}
    (\bR_1,\ldots,\bR_n).
\end{equation}
The sign $(-1)^{n+1}$ arises from the following count: the
left-hand side of the constitutive relation contributes one factor
of $(-1)$ from $\iota^*\bP = -\bP$, while the right-hand side
contributes $n$ factors of $(-1)$ from $\iota^*\bE = -\bE$,
one per field argument.  The net sign imposed on the kernel is
therefore $(-1)^{1+n}$:

\begin{table}[ht]
\centering
\renewcommand{\arraystretch}{1.8}
\begin{tabular}{|c|c|c|c|}
\hline
Order $n$ & Sign $(-1)^{n+1}$ & Kernel parity
  & Physical consequence \\
\hline
$1$ (linear)
  & $+1$ & even
  & Linear response allowed \\
$2$ ($\chi^{(2)}$, SHG)
  & $-1$ & odd
  & Bulk SHG forbidden \\
$3$ ($\chi^{(3)}$, Kerr)
  & $+1$ & even
  & Bulk Kerr effect allowed \\
$4$
  & $-1$ & odd
  & Bulk $\chi^{(4)}$ forbidden \\
\hline
\end{tabular}
\caption{Parity-induced selection rule for the bulk nonlocal kernel
  $\tilde\Delta^{(n)}_{ij_1\cdots j_n}$ in a centrosymmetric medium. The
  required kernel parity under $\bR\mapsto -\bR$ is $(-1)^{n+1}$: even
  orders ($n=2,4,\ldots$) force the bulk kernel to be odd, hence its
  symmetric integral over space vanishes and all even-order bulk
  susceptibilities are forbidden — the classical centrosymmetric selection
  rule.}
\label{tab:parity_rule}
\end{table}

This is the precise mathematical expression of the well-known rule:
in a centrosymmetric medium, all even-order nonlinear susceptibilities
vanish in the bulk.

\subsection{Connection with differential geometry}

The distinction between $\bE$ and $\bP$ has a deeper geometric
origin that is worth making explicit.

\paragraph{$\bE$ as a differential 1-form.}
In the language of differential geometry, the electric field
$\bE = -\diff\varphi$ is naturally a \emph{differential 1-form},
i.e., a section of the cotangent bundle $T^*M$.  Its components
$E_i$ transform as covariant components (covector) under changes of
coordinates, and in particular change sign under the orientation-
reversing map $\iota$.

\paragraph{$\bP$ as a vector density.}
The polarization $\bP$ is defined as a dipole moment per unit volume.
In the language of differential geometry, it is naturally a
\emph{vector density} (a twisted vector field), i.e., a section of
$TM \otimes |\Lambda^3 T^*M|$, where the factor $|\Lambda^3 T^*M|$
accounts for the volume element.  Under the inversion $\iota$, the
volume element $\diff^3\br$ acquires a factor of $(-1)^3 = -1$
(since $\iota^*(\diff x\wedge\diff y\wedge\diff z) =
(-\diff x)\wedge(-\diff y)\wedge(-\diff z) = -\diff x\wedge\diff y
\wedge\diff z$), while the vector part also changes sign.  The two
signs compensate, and one might expect $\bP$ to be invariant.
However, $\bP$ is a \emph{polar} vector density, meaning it is
defined without the absolute value of the volume element: it
transforms as a polar vector and changes sign under $\iota$.

\paragraph{The kernel as a tensor-valued distribution.}
The kernel $\tilde\Delta^{(n)}_{ij_1\cdots j_n}(\bR_1,\ldots,\bR_n)$
is a tensor-valued distribution on $(\R^3)^n$, of type
$(1,n)$ (one contravariant index from $\bP$, $n$ covariant indices
from the $n$ factors of $\bE$).  Under the inversion $\iota$ acting
simultaneously on all arguments, a $(1,n)$ tensor acquires a sign
$(-1)^{1+n}$: one sign from the contravariant index
(from $\bP$, a polar vector), and $n$ signs from the $n$ covariant
indices (from $\bE$, a polar covector).  Centrosymmetry requires
this tensor to be invariant under $\iota$, which forces:
\begin{equation}
  \tilde\Delta^{(n)}(-\bR_1,\ldots,-\bR_n)
  = (-1)^{1+n}\,\tilde\Delta^{(n)}(\bR_1,\ldots,\bR_n),
\end{equation}
recovering~\eqref{eq:parity_n}.

\begin{remark}[Pseudovectors and axial tensors]
  The magnetic field $\bH$ and the magnetic induction $\bB$ are
  \emph{axial} vectors (pseudovectors): they are invariant under
  inversion rather than changing sign.  This is why magnetic
  materials can have odd-order magnetization responses that are
  forbidden by centrosymmetry for polar responses.  The framework
  developed here applies mutatis mutandis to magnetic susceptibility
  kernels, with the sign rule modified accordingly.
\end{remark}

\begin{remark}[Neumann's principle]
  The parity condition~\eqref{eq:parity_n} is a special case of
  Neumann's principle \cite{ZollaWild3} the symmetry group of a physical property
  must include the symmetry group of the medium.  For a
  centrosymmetric medium, the inversion $\iota$ belongs to the
  symmetry group, and the kernel must be invariant under the
  action of $\iota$ on tensor fields of its type.  The parity
  $(-1)^{n+1}$ is precisely this action, and the vanishing of
  bulk $\chi^{(2)}$ is its most celebrated consequence.
\end{remark}

\printbibliography

\end{document}